\documentclass[a4paper, 12pt]{article}

\usepackage[margin=2.4cm]{geometry}  
\usepackage{amssymb, amsfonts, amsmath, amsthm, amsxtra}  
\usepackage{thmtools, thm-restate}  
\usepackage{graphics, graphicx}  
\usepackage{old-arrows}  
\usepackage{tabularx}  
\usepackage{enumerate}  
\usepackage{authblk}  
\usepackage{hyperref}  
\usepackage{xcolor}  
\usepackage{titlesec}  
\usepackage{tcolorbox}  
\usepackage[lined, ruled, vlined, algo2e]{algorithm2e}  


\definecolor{clouds}{RGB}{236, 240, 241}  
\definecolor{electron}{RGB}{9, 132, 227}  
\definecolor{alizarin}{RGB}{231, 76, 60}  
\definecolor{asbestos}{RGB}{127, 140, 141}  


\newcommand{\cc}[1]{\mathcal{#1}}  
\newcommand{\cb}[1]{\mathbb{#1}}  
\newcommand{\csf}[1]{\textsf{#1}}  
\newcommand{\csmc}[1]{\textsc{#1}}  
\newcommand{\ctt}[1]{\texttt{#1}}  


\hypersetup{
    colorlinks=true,
    citecolor=alizarin,
    linkcolor=asbestos,
    urlcolor=alizarin,
}


\newcommand{\card}[1]{\vert #1 \vert}  
\newcommand{\U}{R}  
\newcommand{\C}{\cc{C}}  
\newcommand{\IS}{\Sigma}  
\newcommand{\Sb}{\text{\csf{Sub}}_{\land}^1}  
\newcommand{\Lt}{\cc{L}}  
\newcommand{\dom}{\csf{dom}}  
\newcommand{\Pm}{\csf{Pr}}  
\newcommand{\coPm}{\csf{CPr}} 
\newcommand{\cl}{\phi}  
\newcommand{\F}{\textbf{F}}  
\newcommand{\NP}{\csf{NP}}  
\newcommand{\DAV}[1]{\dom_{abs}(#1)}  

\DeclareMathOperator{\imp}{\rightarrow}  
\DeclareMathOperator{\eqv}{\Longleftrightarrow}  
\DeclareMathOperator{\mt}{\land}  
\DeclareMathOperator{\jn}{\lor}  
\DeclareMathOperator{\Mt}{\bigwedge}  
\DeclareMathOperator{\Jn}{\bigvee}  


\declaretheorem[name=Theorem]{theorem}
\declaretheorem[name=Lemma]{lemma}

\declaretheorem[name=Proposition]{proposition}
\declaretheorem[name=Definition, style=definition]{definition}
\declaretheorem[name=Example, style=definition]{example}
\declaretheorem[name=Remark, style=remark]{remark}

\declaretheorem[name=Theorem, numbered=no]{theorem*}  
\declaretheorem[name=Lemma, numbered=no]{lemma*}  
\declaretheorem[name=Proposition, numbered=no]{proposition*}  


\tcbuselibrary{many}

\newtcolorbox{problem}[1]{
colframe=clouds,
titlerule style=\color{black},
colback=white,
fonttitle=\color{black},
arc=1mm,
enhanced,
attach boxed title to top left={xshift=0.5cm, yshift=-3.7mm},
boxed title style={colframe=white},
title=\csmc{#1},
colbacktitle=white,
}

\newcommand{\PbState}[2]{
\begin{tabular}{l p{0.7\textwidth}}
	\textit{Input: \hspace{0.7em}} & #1 \\ 
	\textit{Question: \hspace{0.7em}} & #2 \\	
\end{tabular}
}

\newcommand{\Problem}[3]{
\begin{problem}{#1}
	\PbState{#2}{#3}
\end{problem}
}


\title{Towards declarative comparabilities: application to functional dependencies}

\author[1]{Lhouari Nourine}
\author[2]{Jean-Marc Petit}
\author[3]{Simon Vilmin}

\affil[1]{Universit\'e Clermont-Auvergne, CNRS, Mines de Saint-\'Etienne, 
Clermont-Auvergne-INP, LIMOS, 63000 Clermont-Ferrand, France.}
\affil[2]{INSA Lyon, CNRS, UCBL, Centrale Lyon, Univ Lyon 2, LIRIS,
UMR5205, F-69621 Villeurbanne, France}
\affil[3]{Aix-Marseille Universit\'e, CNRS, LIS, Marseille, France.}

\begin{document}

\maketitle

\begin{abstract}
In real life, data are often of poor quality as a result, for 
instance, of 
uncertainty, mismeasurements, missing values or bad inputs.
This issue hampers an implicit yet crucial operation of every database management system: 
equality testing.
Indeed, equality is, in the end, a context-dependent operation with a plethora of 
interpretations.
In practice, the treatment of different types of equality is left to programmers, who 
have to struggle with those interpretations in their code.
We propose a new lattice-based declarative framework to address this problem.
It allows specification of numerous semantics for equality at a high level of abstraction.
To go beyond tuple equality, we study functional dependencies (FDs) in the light of our 
framework.
First, we define abstract FDs, generalizing classical FDs.
These lead to the consideration of particular interpretations of equality: realities.
Building upon realities and possible/certain answers, we introduce possible/certain FDs 
and give some related complexity results.
\end{abstract}

\section{Introduction}
\label{sec:introduction}

Given two values $u$ and $v$, how can we decide whether or not $u$ and $v$ are 
equal?
The question is simple to ask, but much harder to answer as there exist countless 
possible interpretations of equality.
Moreover, it is an implicit but mainstream operation in every database management system 
(DBMS) as well as in most database applications among which query answering 
\cite{abiteboul1995foundations, amendola2018explainable, greco2014certain}, data 
integration \cite{lenzerini2002data}, inconsistent databases \cite{bertossi2011database}, 
probabilistic data \cite{suciu2011probabilistic}, data profiling \cite{abedjan2018data}, 
and database 
design \cite{link2019relational}.

At the same time, the quality of real life data is (very) often quite poor 
\cite{batini2009methodologies}.
This is due to issues such as uncertainty, measurement errors, bad input or missing 
values to mention but a few.
In those contexts, equality testing becomes even more crucial for dealing with data 
quality issues. 

Finally, domain experts are usually the only ones who can specify what 
\textit{``equality''} truly means on their data.
As a consequence, the exact meaning of equality turns out to be plural and 
context-dependent, as many interpretations are indeed possible. 
In practice, implementation is left to DBMS programmers, who have to struggle with 
all those different interpretations of equality in their (SQL) code. 

To cope with this issue, we introduce a new lattice-based declarative framework for 
relational databases.
It allows specification of different meanings of equality at a high level of abstraction.
As a consequence, we consider equality testing to be a first-class citizen.
The notion of comparabilities turns out to play a major role in our framework, defined at 
attribute level by \emph{comparability functions} and \emph{abstract lattices}. 
Different \emph{interpretations} are then possible as per the usual logic, making it 
possible to decide whether two values (or two tuples) are indeed equal. 

One of the motivation for our work draws from a collaboration with an industrial partner specialized in cold chain, refrigereation and conditioning (\href{https://www.cemafroid.fr/index-en.htm}{CEMAFROID}).
The aim was to predict the ageing of refrigerating transport vehicles from their data \cite{LeGuilly21}.
Within this collaboration, a finer consideration of equality was a key notion to select relevant data in SQL in order to get better results.
Continuing this collaboration, we used the framework we introduce in this paper (with comparability functions, abstract values and lattices) to model 20 important attributes at a high level of abstraction.
A (preliminary) prototype has been implemented on top of PostgreSQL with two main components: first, a concrete language  on top of the DDL of PosgreSQL and second,  a SQL fragment to express simple selection queries with equality.

In this paper, we focus instead on its impact on functional dependencies (FDs). 
We define \emph{abstract FDs} on top of the lattice-based framework.
Abstract FDs lead to the consideration of particular interpretations, called realities.
Inspired by possible/certain query answering in databases \cite{libkin2016sql}, as well 
as weak/strong or possible/certain FDs \cite{al2022strongly, levene1998axiomatisation}, 
we use realities to define similar notions on functional dependencies, namely possible 
and certain FDs.
We give some complexity results about possible/certain FDs.

To deal with missing or uncertain informations, many works have introduced new 
types of dependencies (see e.g. 
\cite{baixeries2018characterizing, bertossi2013data, caruccio2015relaxed, 
link2019relational, ng2001extension}).
In these approaches, the comparison of two values is a one-step process which returns a 
truth value of the underlying logic (usually the classic binary logic or the real 
interval $[0, 1]$).
Our framework however replaces the operation of testing the equality of two values by a 
two-step process: comparison and interpretation.
Moreover, it does not modify the input data, contrary to common approaches for dealing 
with missing information \cite{al2022strongly, kohler2016possible, 
kohler2018sql, levene1998axiomatisation, lien1982equivalence}.
Consequently, our framework supplies experts with the possibility to declare meaningful 
comparability functions and different semantics for equality separately, without having 
to modify their data.

Another well-studied way of dealing with uncertainty is to extend the classical 
relational approach to a new system based on more elaborated logics such as fuzzy 
relational systems (see e.g. \cite{belohlavek2018relational, bosc1999database, 
jevzkova2017fuzzy, prade1984generalizing, yahia1999extension}).
Nevertheless, unlike fuzzy systems, our framework relies on the classical relational 
model and its underlying binary logic.
Thus, the dependencies we consider are different in nature from the generalization of the 
classical FDs to the fuzzy set up.
Moreover, since our framework is based on the usual relational model, it can come on top 
of any DBMS and does not require the implementation of a whole new system.

\paragraph{Contributions.}
We summarize the main contributions of our paper:
\begin{itemize}
\item We introduce a declarative framework at database scheme level to consider 
different interpretations of equalities as first-class citizens
It is based on three concepts applied attribute-wise: comparability functions, 
abstract lattices, and interpretations.

\item We apply our framework to functional dependencies:
\begin{itemize}
\item We show that any relation can be associated with an \emph{abstract lattice}, thus 
giving a semantic for \emph{abstract functional dependencies}.
We prove that, under certain circumstances, an interpretation 
preserves the validity of classical FDs for any abstract lattice. 
These interpretations are called realities and strong realities.


\item We investigate the plausibility of a given functional dependency: whether it is 
\emph{possible}, i.e., there exists at least one reality for which the FD holds, or whether 
it is \emph{certain}, i.e., holds for every reality. 
We also discuss this problem for the case of strong realities and show that 
deciding strong possibility is $\NP$-complete.
\end{itemize}
\end{itemize}

\paragraph{Paper organization.}
In Section \ref{sec:nutshell}, we present our framework informally on a running example.
Section \ref{sec:preliminaries} recalls standard notions from database and lattice 
theory.
We formally introduce our lattice-based framework in Section \ref{sec:framework}.
In Section \ref{sec:fd}, we study functional dependencies in our framework.
In Section \ref{sec:related}, we discuss related works on handling incomplete and unclean 
information in databases.
At last, Section \ref{sec:conclusion} concludes with open questions for further research.

\section{Framework in a nutshell}
\label{sec:nutshell}

In this section, we describe our framework using the following running example.
\begin{example}(Running example)
A doctor is interested in the triglyceride level (mmol/L), the gender
(M/F), and the waist size (cm) of her patients.
We denote these attributes by $A$, $B$, $C$ respectively.
A sample is represented in Table \ref{tab:medical-data} as the relation 
$\mathrm{Patients}$.
It contains a missing value, outliers, and some very similar values.
\begin{table}[ht!]
\centering
\includegraphics[scale=1.0, page=1]{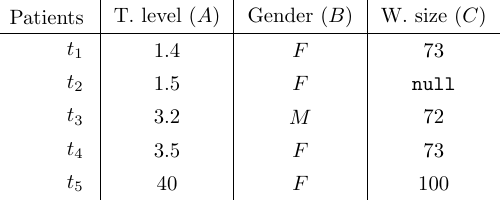}%
\caption{Running example on medical data}
\label{tab:medical-data}	

\end{table}
\end{example}

Each attribute is first equipped with a \emph{comparability function} 
that maps every pair of values in the attribute domain to an \emph{abstract value} in a 
fine-grained similarity scale, ordered in an \emph{abstract lattice}.
A comparability function must be associative.
Both the comparability function and the abstract lattice can be given at database design 
time. 

\begin{example} (Continued)
Consider the waist size.
Based on her background knowledge, the doctor supplies conditions to compare 
different waist sizes.
First, she provides the main comparability values (abstract values), namely 
\textit{close}, \textit{unknown}, and \textit{distant}.
Then, she uses these abstract values to define an appropriate way to compare two 
waist sizes ($x$ and $y$ are waist sizes):
\begin{itemize}
	\item $\textit{close} \text{ if } x = y \neq \ctt{null} \text{ or } x, y 
	\in [70, 80]$,
	\item $\textit{unknown} \text{ if } u = \ctt{null} \text{ or } v = 
	\ctt{null} $,
	\item $\textit{distant} \text{ otherwise.}$
\end{itemize}
Eventually, she has to order the abstract values according to the principle 
\textit{``the higher the closer''}.
In the case of waist size, she ends up with the abstract lattice depicted in 
Figure \ref{fig:abstract-lattices} by the lattice $\Lt_C$.
For example, $72$ and $73$ are \textit{equal} while $73$ and $100$ are \textit{distant}. 
\begin{figure}[t]
	\centering 
	\includegraphics[page=2, scale=1.0]{Figures/nutshell.pdf}%
	\caption{Abstract lattices}
	\label{fig:abstract-lattices}
\end{figure}
More generally, we assume that the following comparability functions are defined, along 
with their associated abstract lattices (see Figure \ref{fig:abstract-lattices}).
\[
f_A(x, y) = \begin{cases}
	\textit{equal} & \text{ if } x = y \text{ or } x, y \in [0, 2[ \\
	\textit{second class} & \text{ if } x, y \in [2, 5[, x \neq y \\
	\textit{distributed} & \text{ if } (x, y) \text{ or } (y, x) \in [0, 2[ \times 
	[2, 5[ \\
	\textit{abnormal} & \text{ otherwise.}
\end{cases}
\]
\[
f_B(x, y) = \begin{cases}
	\textit{equal} & \text{ if } x = y \\
	\textit{different} & \text{ if } x \neq y. 
\end{cases}
\]
\[
f_C(x, y) = \begin{cases}
	\textit{close} & \text{ if } x = y \neq \ctt{null} \text{ or } x, y \in 
	[70, 80]\\
	\textit{unknown} & \text{ if } u = \ctt{null} \text{ or } v = \ctt{null} \\
	\textit{distant} & \text{ otherwise.}
\end{cases}
\]

We use abbreviations for each value: $e$ stands for \textit{equal}, $s$ for 
\textit{second class}, $a$ for \textit{abnormal}, $u$ for \textit{unknown}, $c$ for 
\textit{close}, and $d_A$ for \textit{distributed}, $d_B$ for \textit{different} and $d_C$ for \textit{distant}.
\end{example}

Remark that all three comparability functions extend classical equality since every value, except \ctt{null}, compared with itself gives the highest possible abstract value.
Then, a comparison of two tuples with this framework yields an \emph{abstract tuple}.
This is a precise account of their agreement on each attribute.


\begin{example} 
(Continued) Consider the tuples $t_1, t_2$ of our running example.
Let $f_{\U}$ be the comparability function consisting of the attribute-wise 
application of $f_A, f_B$ and $f_C$.
The comparison of $t_1$ and $t_2$ is $f_{\U}(t_1, t_2) = \langle f_A(1.4, \allowbreak 
1.5), f_B(F, F), $ $f_C(73, \ctt{null}) \rangle = \langle e, e, u \rangle$.
\end{example}

Abstract tuples may not be intuitive enough to be used by practitioners.
Thus, the next step is to use \emph{interpretations} to decide whether or not an abstract 
value is considered to be equality.
At attribute level, an interpretation maps an abstract value to $0$ or $1$.
As usual, $0$ means difference, and $1$ means equality.
We further require that an interpretation is increasing: the higher the abstract value 
in the lattice, the more it gets close to equality. 
Thus, an interpretation is a semantic for equality on a attribute.
At the scheme level, an interpretation states whether or not two tuples are considered 
equal.

\begin{example}(Continued)
The doctor gives several hypotheses on the meaning of equality with respect to her 
data.
These lead to three possible interpretations $g_1, g_2, g_3$ which are represented in 
Figure \ref{fig:interpretations}.
Above the dotted lines, the abstract values are interpreted as $1$, and $0$ otherwise.
With $g_1$ and $g_2$, $t_1$ and $t_2$ are not equal, while they are with $g_3$ (we have 
$g_3( \langle e, e, u \rangle) = \langle 1, 1, 1 \rangle$).

\begin{figure}[ht!]
	\centering 
	\includegraphics[scale=0.8, page=3]{Figures/nutshell.pdf}%
	\caption{Three possible interpretations of similarity values: $1$ 
		above the dotted line, $0$ below.}
	\label{fig:interpretations}
\end{figure}
	
\end{example}

Quite clearly, this framework provides domain experts with an opportunity to specify at a 
high level of abstraction different types of equality on top of database schemes, 
independently from the underlying applications and without updating the data.
We summarize our framework with the pipeline presented in 
Figure \ref{fig:framework}, applied to the tuples $t_1, t_4$ of $\textrm{Patients}$.

\begin{figure}[ht!]
	\centering 
	\includegraphics[scale=0.85, page=4]{Figures/nutshell.pdf}%
	\caption{Pipeline of the framework on $t_1$, $t_4$.}
	\label{fig:framework}
\end{figure}

\paragraph{Application to functional dependencies.}
Usually, a functional dependency $X \imp Y$ holds in a given relation if two tuples that 
are equal on $X$ are also equal on $Y$. 
This definition smoothly adapts to our framework using interpretations.
More precisely, $X \imp Y$ holds with respect to a given interpretation $g$ if two tuples 
that are considered equal on $X$ with respect to $g$, are also considered equal on $Y$ 
with respect to $g$.

From a more theoretical point of view, we show that a relation satisfies a more general 
type of dependencies based on abstract tuples, independently from any interpretation.
These dependencies are called \emph{abstract functional dependencies}.
They are lattice implications \cite{day1992lattice} of the form $x \imp y$, where $x$ and 
$y$ are abstract tuples drawn from the product of abstract lattices. 
In the manner of classical functional dependencies and agree sets 
\cite{beeri1984structure}, the abstract functional dependencies satisfied by a relation 
capture the abstract knowledge of this relation, being a collection of abstract tuples 
ordered in a new abstract lattice.
In this context, an interpretation turns the abstract lattice into a set system over the 
relation scheme.
This separates the interpretations: some of them preserve the semantic of 
classical FDs, i.e. the abstract lattice is interpreted as a closure system 
\cite{davey2002introduction}, and some do not.
We call \emph{realities} these semantic preserving interpretations.
It turns out that realities are precisely the interpretations that are lattice 
meet-homomorphisms \cite{davey2002introduction}.
As a side effect, the equality in realities can be represented by a unique minimum 
element in each attribute's abstract lattice.
We also study \emph{strong realities}, the interpretations that are lattice homomorphisms.

Building upon realities, we introduce the notion of plausibility 
(certain/possible) of a given FD, inspired from query answering 
\cite{amendola2018explainable, bertossi2011database, libkin2016sql} and weak/strong or 
possible/certain FDs \cite{levene1998axiomatisation, al2022strongly}.
More precisely, the existence of numerous realities leads to cases of equality semantic 
where the FD holds, and some others where it does not.
In this case, the FD is \emph{possible}.
In practice, this may be the case if, for instance, the relation suffers from unclean 
data: a FD 
should theoretically hold with respect to background knowledge, but due to 
mismeasurements or imprecisions, it fails to do so.
Here, weakening/strengthening the meaning of equality according to expert knowledge could 
easily fix this problem, by simply applying the \textit{``right''} reality with 
respect to the background knowledge.
After possibility, it is also natural to wonder for a given FD if there is a 
chance that it holds in every reality, in which case we say that it is \emph{certain}.
It is worth noticing that the data are left unchanged in our framework.
Hence, the terms \textit{``possible''} and \textit{``certain''} differ from the more 
usual point of view where a relation with \ctt{null} values is subject to numerous 
completions.

\begin{example}
(Continued)
The doctor knows from her background knowledge that the triglyceride level is 
determined by waist size and gender.
This can be modeled by the functional dependency $BC \imp A$.
Among the realities given in Figure \ref{fig:interpretations}, $g_2$ is the only one 
for which $BC \imp A$ holds.
The tuples $t_1$, $t_4$ represent a counter-example to $BC \imp A$ for $g_1$ and 
$g_3$.
Realities supply an abstract support to practitioners to help them understand under which 
conditions their background knowledge, modeled as a functional dependency, turns out to 
be indeed true or false in their data.
\end{example}

\section{Preliminaries}
\label{sec:preliminaries}

In this paper, we only consider finite structures. 
First, we recall basic definitions about relational databases \cite{maier1983theory, 
mannila1992design}.
Then, we move to lattice theoretical concepts \cite{davey2002introduction, 
gratzer2011lattice}.

\paragraph{Relational model} 
A \emph{relation scheme} $\U$ is a (finite) set of \emph{attributes} $A_1, \dots, 
A_n$, for $n \in \cb{N}$. 
Its powerset is written $2^{\U}$.
Each attribute $A_i$ has values in some set $\dom(A_i)$ called the 
\emph{domain} of $A_i$.
We consider a unique null value, \ctt{null}, which can be part of every attribute domain.
The domain of $\U$ is the union of attributes domains, that is $\dom(\U) = \bigcup_{A \in 
\U} \dom(A)$.
Capital first letters of the alphabet $A, B, C$ represent attributes, while capital last 
letters such as $X, Y, Z$ denote sets of attributes. 
We write $XY$ as a shortcut for $X \cup Y$, and $A_1A_2A_3$ for $\{A_1, A_2, A_3\}$.
A \emph{tuple} $t$ over $\U$ is a mapping $t \colon \U \to \dom(\U)$ such that $t[A] 
\in \dom(A)$ for each $A \in \U$, where $t[A]$ is the projection of $t$ on $A$.
Similarly, $t[X]$, for some $X \subseteq \U$, denotes the projection of $t$ on $X$.
A \emph{relation} $r$ is a collection of tuples over $\U$.
A \emph{functional dependency} (FD) over $\U$ is a pair $(X, A)$ usually written $X \imp 
A$, where $X \subseteq \U$ and $A \in \U$. 
Without loss of generality, we assume that $X \imp A$ is not trivial, i.e., $A 
\not\in X$.
Let $r$ be a relation over $\U$. 
We say that $r$ \emph{satisfies} the functional dependency $X \imp A$ if for each pair 
$t_1, t_2$ of tuples in $r$, $t_1[X] = t_2[X]$ implies $t_1[A] = t_2[A]$.
This is written $r \models X \imp A$.
If $\F$ is a set of functional dependencies, we write $r \models \F$ to denote 
that $r \models X \imp A$ for each $X \imp A \in \F$.

\paragraph{Lattices, implications}
Let $S$ be a set. 
A \emph{partial order} or \emph{poset} is a pair $(S, \leq)$, where $\leq$ is a binary 
relation that is reflexive, transitive and anti-symmetric. 
Let $x, y \in S$. 
We say that $x$ and $y$ are \emph{comparable} if $x \leq y$ or $y \leq x$. 
Otherwise, they are \emph{incomparable}. 
A \emph{lower bound} $\ell$ of $x$ and $y$ is an element of $S$ satisfying $\ell \leq 
x$ 
and $\ell \leq y$. 
Moreover, if for any lower bound $\ell'$ of $x$ and $y$ we have $\ell' \leq \ell$, $\ell$ 
is 
called the \emph{meet} of $x$ and $y$. 
We denote it by $x \mt y$. 
Dually, an \emph{upper bound} of $x$ and $y$ is an element $u \in S$ such that $x \leq u$ 
and 
$y \leq u$. 
If $u$ is minimal with respect to all upper bounds of $x$ and $y$, it is the \emph{join} 
of $x$ and $y$. 
We write it $x \jn y$.
For $X \subseteq S$, the meet $\Mt X$ and join $\Jn X$ of $X$ are defined accordingly.
A poset in which every pair of elements has both a meet and a join is a 
\emph{lattice}.
A lattice will be noted $\Lt(S, \leq, \jn, \mt)$ or simply $\Lt$ when precision is
not necessary.
It has a minimum element $0$ called the \emph{bottom}, and a maximum one, the 
\emph{top}, denoted by $1$.
Let $\Lt$ be a lattice, and $x, y, z \in \Lt$ such that $y < x$ (i.e., $y \leq x$ and 
$y \neq x$).
If for any $z \in \Lt$, $y < z \leq x$ implies $x = z$, then we say that $x$ 
\emph{covers} $y$ and we write $y \prec x$.
An element $a \in \Lt$ which covers the bottom $0$ of $\Lt$ is called an \emph{atom}.
We denote by $\cc{A}(\Lt)$ the set of atoms of $\Lt$.
Let $m \in \Lt$.
We say that $m$ is \emph{meet-irreducible} if for any $x, y \in \Lt$, $m = x \mt y$ 
implies $m = x$ or $m = y$.
A \emph{join-irreducible} element $j$ is defined dually with $\jn$.
Meet-irreducible elements (resp. join-irreducible elements) of $\Lt$ are denoted by 
$\cc{M}(\Lt)$ (resp. $\cc{J}(\Lt)$).
Let $\Lt_1, \Lt_2$ be two lattices. 
A map $\varphi \colon \Lt_1 \to \Lt_2$ is \emph{increasing} if for any $x, y \in 
\Lt_1$, $x \leq y$ implies $\varphi(x) \leq \varphi(y)$.
It is a \emph{$\mt$-homomorphism} if for any $x, y \in \Lt_1$, $\varphi(x \mt y) = 
\varphi(x) \mt \varphi(y)$. 
\emph{Decreasing} maps and \emph{$\jn$-homomorphisms} are defined dually. 
The map $\varphi$ is a \emph{homomorphism} if it is both a $\jn$-homomorphism and a 
$\mt$-homomorphism. 
If $\Lt_1, \dots, \Lt_n$ are lattices, the cartesian product $\prod_{i = 1}^{n} 
\Lt_i$ 
with component-wise induced order is the \emph{direct product} of $\Lt_1, \dots, 
\Lt_n$. 
We have $x \leq y$ if and only if $x_i \leq y_i$ for every $1 \leq i \leq n$. 
If $\Lt$ is a lattice, a \emph{$\mt$-sublattice} $\Lt'$ of $\Lt$ is a lattice which 
is a subset of $\Lt$ such that $x, y \in \Lt'$ implies $x \mt y \in \Lt'$.

Now, we introduce lattice implications and closure operators, based on definitions and 
results of \cite{day1992lattice}.
Let $\Lt$ be a lattice and $\cl \colon \Lt \to \Lt$.
The map $\cl$ is a \emph{closure operator} if for any $x, y \in \Lt$, $x \leq \cl(x)$ 
(extensive), $x \leq y \implies \cl(x) \leq \cl(y)$ (increasing) and $\cl(\cl(x)) = 
\cl(x)$ (idempotent).
Then, $\cl(x)$, is the \emph{closure} of $x$ with respect to $\cl$.
The set $\Lt_{\cl} = \cl(\Lt) = \{x \in \Lt \mid \cl(x) = x \}$ of fixed points (or 
closed elements) of $\cl$ is a $\mt$-sublattice of $\Lt$ which contains the top of 
$\Lt$.
Therefore, we also obtain $\cl(x) = \Mt \{y \in \Lt_{\cl} \mid x \leq y \}$.
The set of all such $\mt$-sublattices of $\Lt$ is denoted $\Sb(\Lt)$.
When $\Lt$ equals $2^{\U}$ for some attribute set $\U$, and $\cl$ is a closure 
operator on $\Lt$, $\Lt_{\cl}$ is a \emph{closure system} 
\cite{gratzer2011lattice} (in this case $\mt$ is the set intersection $\cap$).
A \emph{(lattice) implication} is a pair $x \imp y$ such that $x, y \in \Lt$.
We say that an element $z \in \Lt$ satisfies the implication $x \imp y$ if $x \leq z 
\implies y \leq z$, and we write $z \models x \imp y$.
Let $\IS$ be a set of implications over $\Lt$.
We write $z \models \IS$ if $z \models x \imp y$ for each $x \imp y \in \IS$.
We also say that $z$ is a \emph{model} of $\IS$.
Moreover, $\IS$ induces a closure operator $\cl^{\IS}$ on $\Lt$ such that $x \in \Lt$ is
closed if and only if $x \models \IS$.
Formally, $\Lt_{\cl^{\IS}} = \{ x \in \Lt \mid x \models \IS\}$.
Dually, any closure operator $\cl$ on $\Lt$, and hence any $\Lt' \in \Sb(\Lt)$, can 
be represented by implications \cite{day1992lattice}.
Let $\Lt' \in \Sb(\Lt)$ and $x \imp y$ be an implication on $\Lt$.
We obtain $\Lt'$ \emph{satisfies} $x \imp y$, written $\Lt' \models x \imp y$ if 
for 
any $z \in \Lt'$, $z \models x \imp y$.
The notion $\Lt' \models \IS$ follows.
We have the following folklore property: $\Lt' \models x \imp y$ if and only if $y 
\leq \cl(x)$.

\begin{example} \label{ex:ex-lattice}
Let us first consider the lattice $\Lt$, to the left of Figure \ref{fig:ex-lattice}.
The atoms are $a, b$ and $c$. 
Now consider the set $\IS = \{e \imp d, b \imp d\}$ of lattice implications.
The corresponding $\mt$-sublattice $\Lt_{\cl^{\IS}}$ of $\Lt$ is represented to the right 
of Figure \ref{fig:ex-lattice}.
For instance, $f$ is not closed with respect to $\IS$ because it does not satisfy the 
implication $b \imp d$: $b \leq f$ but $d \nleq f$.
The lattice $\Lt_{\cl^{\IS}}$ is also associated to a closure operator $\cl_{\IS}$ 
mapping an element of $\Lt$ to an element $\Lt_{\cl^{\IS}}$.
For example, the closure $\cl_{\IS}(b)$ of $b$ is $d$: it is the unique minimum element 
$x$ of $\Lt_{\cl^{\IS}}$ satisfying $b \leq x$.

\begin{figure}[ht!]
\centering 
\includegraphics[scale=1.0, page=1]{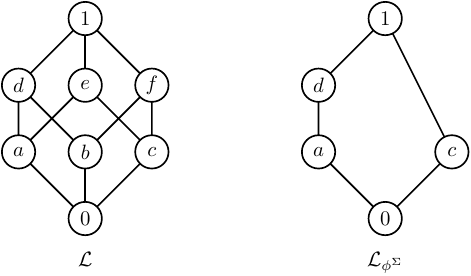}%
\caption{The two lattices of Example \ref{ex:ex-lattice}}
\label{fig:ex-lattice}
\end{figure}
\end{example}

\section{A lattice-based framework to handle comparabilities}
\label{sec:framework}

In this section, we formally define our framework.
Then, we show the relationships between the classical relational model and our framework.

\subsection{Defining the framework}

Let $\U$ be a relation scheme.
For every attribute $A \in \U$, the first step is to define abstract values whenever two 
elements of $\dom(A)$ are compared. 
The \emph{abstract domain} of $A$, denoted by $\DAV{A}$, is a finite set of 
\emph{abstract values} disjoint from the domain $\dom(A)$, representing comparabilities 
between every pair of values in $\dom(A)$.
We assume that a partial order $\leq_A$, or simply $\leq$ when no confusion 
arises, exists over $\DAV{A}$ such that the poset $ \Lt_A = (\DAV{A}, \leq_A)$ is a 
lattice.
We say that $\Lt_A$ is the \emph{abstract lattice} of $A$.
We put $0_A$ and $1_A$ as the bottom and top elements (resp.) of $\Lt_A$.
The \emph{abstract domain} $\DAV{\U}$ of $\U$ is the direct product of 
abstract domains, i.e., $\DAV{\U} = \prod_{A \in \U} \DAV{A}$.
Two attributes can have the same abstract domain.

Now, we define the explicit mapping of pairs of domain values to abstract values through 
a comparability function. 
Let $A$ be an attribute in $\U$.
A \emph{comparability function} for $A$ is a surjective map $f_A$ from $\dom(A) \times 
\dom(A)$ to $\DAV{A}$ which is commutative, i.e., such that $f_A(x_A, y_A) = f_A(y_A, x_A)$ for all $x_A, y_A \in \dom(A)$.

\begin{remark}
When using the framework, one can further require reflexivity on the comparability functions, i.e. $f(x_A, x_A) = 1_A$, in order to get a semantic of comparability closer to equality.
Even more, it could be possible to make the functions reflexive on all values but \ctt{null} where some freedom is allowed.
Indeed, in practice the meaning of the \ctt{null} value in the data should be explained by domain experts, along with recommendations on how to deal with it.
Moreover, since the \ctt{null} value indicates a missing value, relaxing reflexivity of comparability functions on \ctt{null} allows to consider absent values as possibly 
different.
Interestingly, while the results we present in the body of this paper are unchanged by reflexivity, we show in \ref{sec:AFDs} that reflexivity is a key property to ensure completeness of (extended) Armstrong axioms. 
\end{remark}

Our approach ensures that the data remains unchanged, and bypasses usual ways of dealing 
with incomplete data such as completion and repairing \cite{amendola2018explainable, 
bertossi2011database, libkin2016sql}.

Let $t_1, t_2$ be two tuples over $\U$.
The comparability function $f_{\U}$ for $\U$ is the combination of comparability 
functions on attributes of $\U$.
More precisely, for every two tuples $t_1, t_2$ over $\U$, we have:
\[ f_{\U}(t_1, t_2) = \langle f_{A_1}(t_1[A_1], t_2[A_1]), \ldots, f_{A_n}(t_1[A_n], 
t_2[A_n]) \rangle \]
Note that $f_{\U}(t_1, t_2)$ is a generalization of agree sets between two tuples 
\cite{beeri1984structure}.
For any relation $r$ over $\U$, we put $f_{\U}(r) = \{ f_{\U}(t_1, t_2) \mid t_1, t_2 
\in r\}$.
Let us denote by $\Lt_{\U} = (\DAV{\U}, \leq_{\U})$ the direct product of abstract
lattices.
An element of $\DAV{\U}$ is an \emph{abstract tuple}.
It is a vector $\langle a_1, \dots, a_n \rangle$ where $a_i \in \DAV{A_i}$ for every $1 
\leq i \leq n$.
We can use $x, y \in \Lt_{\U}$ as a shortcut for $x, y \in \DAV{\U}$.
Also, we will make a slight abuse of the $[\;]$ notation: if $x \in \Lt_{\U}$ we denote 
by $x[A]$ the projection of the vector $x$ on $A$.
We are now ready to introduce attribute and scheme contexts:

\begin{definition}[attribute context, scheme context] \label{def:context}
Let $\U$ be a relation scheme, and let $A \in \U$.
An \emph{attribute context} for $A$ is a triple $\C_A = (A, f_A, \Lt_A)$ where $\Lt_A = 
(\DAV{A}, \leq_A)$ is an abstract lattice and $f_A \colon \dom(A) \times \dom(A) 
\to \DAV{A}$ a comparability function.
A \emph{scheme context} for $\C_{\U}$ is a union of attribute contexts for each 
$A \in \U$, i.e., $\C_{\U} = \{\C_A \mid A \in \U\}$.
\end{definition}

\textbf{Complexity assumption:} for all the computational results of this paper, 
the contexts are given as part of the input.
For a given attribute context $\C_A = (A, f_A, \Lt_A)$, we assume that $f_A$ can be 
computed in polynomial time in the size of its input.
The lattice $\Lt_A$ is given in such a way that the operations $\leq$, $\jn$, $\mt$, can 
be conducted in polynomial time in the size of $\Lt_A$.

From a scheme context $\C_{\U}$, we have both its associated comparability function 
$f_{\U}$ and its abstract domain $\DAV{\U}$, along with the lattice $\Lt_{\U}$.
If no confusion is possible, the subscript $\U$ will be omitted, i.e., we will use 
$\leq, \mt, \jn$ instead of $\leq_{\U}, \mt_{\U}, \jn_{\U}$, respectively.
In the sequel, a relation $r$ will be defined over a scheme context $\C_{\U}$, 
instead of $\U$. 

\begin{example}
In the running example, the abstract domain of $A$ is 
\[ 
    \DAV{A} = \{\textit{equal}, \textit{distributed}, \textit{similar}, \textit{abnormal}\},
\] 
and the corresponding abstract lattice is $\Lt_A$, illustrated in 
Figure \ref{fig:abstract-lattices}.
The attribute context of $A$ is $(A, f_A, \Lt_A)$. 
The scheme context of $\U$ is then $\C_{\U} = \{(A, f_A, \Lt_A), (B, f_B, \Lt_B), \allowbreak
(C, f_C, \Lt_C) \}$.
For instance, the abstract tuple $\langle s, d_B, c\rangle$ results from the comparison 
of $t_3$, $t_4$ in the relation $\textrm{Patients}$: $ f_{\U}(t_3, t_4) = \langle 
f_A(3.2, 3.5), f_B(M, F), \allowbreak f_C(72, 73)\rangle = \langle s, d_B, c \rangle$.
In Table \ref{tab:comparabilities}, we give (a part of) the family 
$f_{\U}(\textrm{Patients})$ of abstract tuples associated with the relation 
$\textrm{Patients}$.
\begin{table}[ht!]
\centering 
\includegraphics[scale=1.0, page=1]{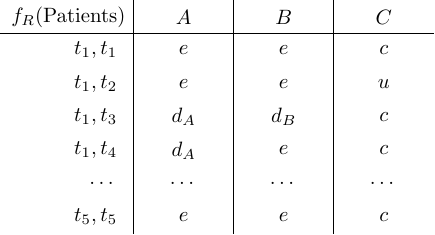}%
\caption{The abstract tuples associated to the relation $\textrm{Patients}$}
\label{tab:comparabilities}
\end{table}
\end{example}

Next, we define (attribute) interpretations as particular mappings from an abstract 
lattice to $\{0, 1\}$.
Their aim is to decide for each two elements in the attribute domain whether they 
can be considered equal.

\begin{definition}[interpretation] \label{def:attr-interp}
Let $\U$ be a relation scheme, $A \in \U$ and $\C_A = (A, f_A, \Lt_A)$ an 
attribute context for $A$. 
An \emph{attribute (context) interpretation} for $A$ is a map $h_A \colon \Lt_A 
\to \{0, 1\}$ satisfying the following properties:
\begin{enumerate}
\item $x \leq y$ in $\Lt_A$ implies $h_A(x) \leq h_A(y)$, i.e., $h_A$ is increasing,	

\item $h_A(1_A) \neq h_A(0_A)$.
\end{enumerate}
\end{definition}

In particular, an interpretation must be surjective, since $h_A(1_A) = 1$ and $h_A(0_A) = 
0$.
Practically, it guarantees that two values in $\dom(A)$ can always be interpreted as 
equal or different.
Moreover, the interpretation has to be increasing. 
Intuitively, if an abstract value $x_A$ of $\Lt_A$ is interpreted as $1$ (i.e., equality) 
by $h_A$, any value $y_A \geq_A x_A$ must be set to $1$ since it is closer to 
\textit{``true''} equality than $x_A$.

\begin{definition}[scheme interpretation] \label{def:sch-interp}
Let $\U = \{A_1, \dots, A_n\}$ be a relation scheme, and $\C_{\U}$ a scheme context.
Let $h_{A_i}$ be an attribute interpretation for $A_i$, for all $1 \leq i \leq n$.
The map $g \colon \Lt_{\U} \to \{0,1\}^n$ defined by $(g(x))[A_i] = h_{A_i}(x[A_i])$, for 
each $x \in \Lt_{\U}$ and each $1 \leq i \leq n$, is a \emph{scheme (context) 
interpretation}.
In the sequel, $g_{\mid A_i}$ will denote the restriction of $g$ to the attribute 
$A_i \in \U$, i.e., $g_{\mid A_i} = h_{A_i}$.
\end{definition}

A scheme interpretation $g$ maps each possible abstract tuple to a binary vector in 
$\{0, 1\}^n$.
This binary vector can be seen as the characteristic vector of some subset of $\U$. 
In the sequel, we will use $g(\cdot)$ interchangeably to denote a set or its 
characteristic vector. 

\begin{example}
We continue our running example.
In Figure \ref{fig:interpretations}, we give three possible scheme interpretations.
Above the dotted lines (white nodes), the abstract values are interpreted as $1$, and $0$ 
otherwise (grey nodes).
For instance, the abstract tuple $\langle e, d_B, u \rangle$ is interpreted as $\langle 1, 
0, 0 \rangle$ with $g_1$ and $g_2$.
It is interpreted as $\langle 1, 0, 1 \rangle$ with $g_3$.
In Table \ref{tab:table-interpretations}, we give (some of) the interpretations of the 
comparison of the tuples of $\textrm{Patients}$.
For example, $g_1(f_{\U}(t_1, t_2)) = g_1(\langle e, e, u \rangle) = \langle 1, 1, 0 
\rangle$ and $g_3(f_{\U}(t_1, t_2)) = \langle 1, 1, 1 \rangle$.
Thus, $t_1$ and $t_2$ are considered equal under the interpretation $g_3$ but different 
for $g_1$ and $g_2$.
\begin{table}[ht!]
\centering 
\includegraphics[scale=1.0, page=1]{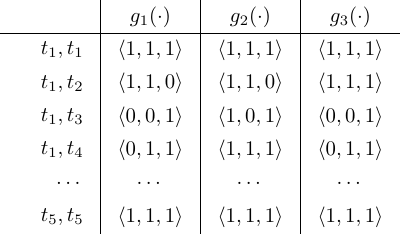}%
\caption{Comparing tuples of $\textrm{Patients}$ with the interpretations of 
Figure \ref{fig:interpretations}}
\label{tab:table-interpretations}
\end{table}
\end{example}

The following statement means that a scheme interpretation $g$ is uniquely defined by 
its components.

\begin{proposition} \label{prop:morphisms}
Let $g$ be a scheme interpretation.
The following properties hold:
\begin{enumerate}
	\item $g$ increasing if and only if $h_{A_i}$ is increasing for any $1 \leq i 
	\leq n$,
	\item $g$ is a $\mt$-homomorphism (resp.~$\jn$-homomorphism) if 
	and only if	for any $1 
	\leq i \leq n$, $h_{A_i}$ is a $\mt$-homomorphism (resp.~$\jn$-homomorphism) .
\end{enumerate}
\end{proposition}

\subsection{Applications to the classical relational model}

In this part, we show how to express the classical relational model in our framework.
We also study the scenario of SQL's 3-valued logic including \ctt{null} values.

\paragraph{Relational model (without nulls)} Given a relation scheme $\U = \{A_1, 
\dots, A_n\}$, we associate to each $A_i \in \U$ the attribute context $(A_i, f_{A_i}, 
\Lt_{A_i})$ where $\Lt_{A_i} = (\{0, 1\}, \leq)$ and $f_{A_i}$ is the strict equality, 
i.e., $f_{A_i}(x, x) = 1$ and $0$ otherwise.
Thus, the abstract tuples associated with a relation are binary vectors over $\{0, 
1\}^n$.
In other words, they are the agree sets of the input relation \cite{beeri1984structure}.
The context $\C_{\U} = \{(A_i, f_{A_i}, \Lt_{A_i}) \mid A_i \in \U\}$ has a unique 
interpretation $g$ such that $g_{\mid A_i}(x[A_i]) = 1$ if $x[A_i] = 1$ and $0$ 
otherwise, for each $A_i \in \U$.
The composition  $g(f_{\U})$ precisely depicts the equality between two tuples.
Moreover, this context suggests that in practice, a domain expert can leave classical 
equality as a default comparabability and apply dedicated comparability functions only on 
some relevant attributes.
In this particular setup, possible and certain FDs (see Section \ref{sec:fd}) coincide, 
and they are precisely the valid classical FDs.
Moreover, abstract functional dependencies (see Section \ref{sec:fd}) are the expression 
of classical FDs in terms of characteristic vectors.
As a consequence, an abstract FD is valid if and only if the corresponding FD is true in 
the relation.
It follows that abstract FDs can be seen as a generalization of classical FDs. 
As such, the hardness of problems on abstract FDs, such as identifying a cover of 
abstract FDs for a relation \cite{maier1983theory}, inherits the complexity of their 
counterpart in terms of classical FDs.
The reader can refer to \ref{sec:AFDs} for more details on 
computational aspects of abstract FDs.

\paragraph{SQL model with nulls} 
A similar approach can be applied to model SQL's three-valued logic with \ctt{nulls}.
Namely, if $\U = \{A_1, \dots, A_n\}$ is a relation scheme, we assign to each attribute 
$A_i$ of $\U$ the following attribute context $(A_i, f_{A_i}, \Lt_{A_i})$:
\begin{itemize}
\item $\Lt_{A_i} = (\{ 0, u, 1\}, \leq)$  and $\leq$ is defined by $0 \leq u \leq 1$ ($u$ 
stands for \textit{unknown}),
\item $f_{A_i}(x, x) = 1$ if $x \neq \ctt{null}$, $f_{A_i}(x, y) = u$ if $x$ or $y$ is 
\ctt{null} and $0$ otherwise.
\end{itemize}
Usually, functional dependencies are not defined in the presence of \ctt{null} values.
With the help of interpretations, our framework offers an easy way to tell whether
two tuples are equal in the presence of \ctt{null}.
Even more, it proposes two types of functional dependencies: (i) abstract FDs, 
using the \textit{unknown} value in abstract tuples, and (ii) FDs based on 
interpretations of the context (see Section \ref{sec:fd} for formal a definitions of 
abstract FDs). 

\section{Application to functional dependencies}
\label{sec:fd}

In this section, we study functional dependencies in light of our framework.
First, we show that each relation is associated with an \emph{abstract lattice}, which 
can be represented by lattice implications called \emph{abstract functional 
dependencies}.

Second, we characterize those interpretations that guarantee that any abstract lattice is 
turned into a closure system, thus paving the way for functional dependencies 
\cite{demetrovics1992functional}.
These interpretations are called \emph{(strong) realities}.
Using realities, we establish several results connecting AFDs and classical FDs.
Complementary results regarding abstract FDs can be found in 
\ref{sec:AFDs}.

In the third part of this section, we use realities to introduce \emph{possible} and 
\emph{certain} functional dependencies.
A FD is (strongly) possible if there exists a (strong) reality for which it holds, and 
(strongly) certain if it holds for every (strong) reality.
In particular, we show that the decision as to whether a given FD is possible/certain can 
be conducted in polynomial time.
We prove however that deciding strong possibility is \csf{NP}-complete.

\subsection{Abstract functional dependencies, realities}
\label{subsec:realities}

Let $r$ be a relation over a scheme context $\C_{\U}$.
As a reminder, $\Lt_{\U}$ is the product of abstract lattices $\prod_{i = 1}^{n} 
\Lt_{A_i}$.
The\emph{ abstract lattice} associated with $r$ is 
\[ \Lt_r = \left( \left\{\Mt T \mid T \subseteq f_{\U}(r) \right\}, \leq_{\U} 
\right) \]
As usual, $\Mt \emptyset$ equals the top element of 
$\Lt_{\U}$ \cite{davey2002introduction, day1992lattice}.
Observe that $\Lt_r$ is a subset of $\Lt_{\U}$. 
It is in fact a $\mt$-sublattice of $\Lt_{\U}$.

\begin{proposition} \label{prop:meet-sublattice}
Let $r$ be a relation over a scheme context $\C_{\U}$.
Then, $\Lt_r \in \Sb(\Lt_{\U})$.
\end{proposition}

\begin{example}
We continue our running example.
The lattice $\Lt_{\U}$ is given in Figure \ref{fig:product}.
We illustrate the abstract lattice associated to the relation $\textrm{Patients}$ in 
Figure \ref{fig:patients-lattice}.
For instance, the abstract tuple $\langle s, d_B, u \rangle$ is obtained by taking $\langle 
s, d_B, c \rangle \mt \langle e, e, u \rangle = f_{\U}(t_3, t_4) \mt f_{\U}(t_1, t_2)$.

\begin{figure}[ht!]
	\centering 
	\includegraphics[scale=0.75, page=1]{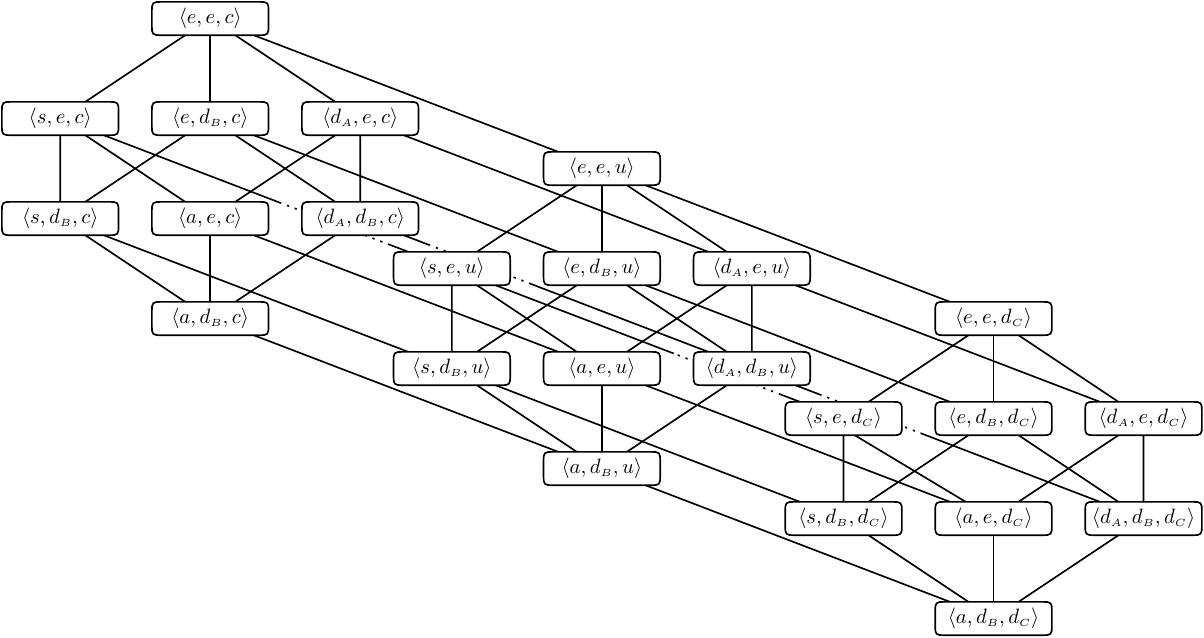}%
	\caption{The product $\Lt_{\U}$ of abstract lattices of the context 
	$\C_{\U}$}
	\label{fig:product}
\end{figure}

\begin{figure}[ht!]
	\centering 
	\includegraphics[scale=0.8, page=2]{Figures/lattices-patients.pdf}%
	\caption{The abstract lattice associated with the relation 
	$\textrm{Patients}$.}
	\label{fig:patients-lattice}
\end{figure}

\end{example}

Since $\Lt_r$ is a $\mt$-sublattice of $\Lt_{\U}$, it corresponds to a closure 
operator $\cl$.
Following \cite{day1992lattice}, $\Lt_r$ can be represented by a set $\IS$ of 
lattice implications, the so-called \emph{abstract functional dependencies} in our 
framework.

\begin{definition}[Syntax] \label{def:AFD-syntax}
Let $\U$ be a relation scheme, $\C_{\U}$ a scheme context, and $\DAV{\U}$ its 
abstract domain.
An \emph{abstract functional dependency} over $\C_{\U}$ is an expression of the form 
$x \imp y$ where $x, y \in \DAV{\U}$.
\end{definition}

\begin{definition}[Satisfaction] \label{def:AFD-semantic}
Let $r$ be a relation over $\C_{\U}$ and $x, y \in \DAV{\U}$.
We say that $r$ satisfies $x \imp y$ with respect to $\C_{\U}$, denoted by $r 
\models_{\C_{\U}} x \imp y$, if for all $t_1, t_2 \in r$, $x \leq f_{\U}(t_1, t_2)$ 
entails $y \leq f_{\U}(t_1, t_2)$.
\end{definition}

When no confusion is possible, we will drop the element $\C_{\U}$ of notations regarding 
functional dependencies. Namely, we will denote by $x \imp y$ an abstract functional 
dependency over $\C_{\U}$ and $r \models x \imp y$ when $r$ models $x \imp y$ with 
respect to $\C_{\U}$.
We first establish a natural relationship between the satisfaction of an abstract 
FD in a relation and its validity in the corresponding abstract lattice.


\begin{proposition} \label{prop:sat-abstract-fd}
Let $r$ be a relation over $\C_{\U}$, and $x \imp y$ an abstract functional dependency.
Then $r \models x \imp y$ if and only if $\Lt_r \models x \imp y$.
\end{proposition}


\begin{remark}
Since $f_{\U}(r)$ is a generating set of $\Lt_{r}$, and for each $x \in \Lt_{r}$ we 
have $\cl(x) = \Mt \{y \in \Lt_r \mid x \leq y\} = \Mt \{y \in f_{\U}(r) \mid x \leq y\} 
$, it is possible to compute $\cl(x)$ in 
polynomial time in the size of $r$ and $\C_{\U}$ (i.e., if the abstract lattice of each 
attribute is given).
Since an abstract FD $x \imp y$ holds in $\Lt_{r}$ if and only if $y \leq \cl(x)$, 
testing the validity of an abstract FD can also be conducted in polynomial time.
\end{remark}

Abstract functional dependencies depict dependencies between the attributes of a 
relation in the absence of interpretations. 
More precisely, the meaning of an abstract FD $x \imp y$ valid in a relation $r$ is 
\textit{``the similarity of two tuples is greater than $y$ if it is greater than $x$''}.

\begin{example}
We continue our running example.
Let us first consider the abstract FD $\langle e, d_B, d_C \rangle \imp \langle s, e, u 
\rangle$.
To determine whether it holds in $\textrm{Patients}$, we show that $\langle s, e, u 
\rangle \leq \cl(\langle e, d_B, d_C \rangle)$ where $\cl(\langle e, d_B, d_C \rangle)$ is the 
closure of $\langle e, d_B, d_C \rangle$ in the associated abstract lattice (see Figure 
\ref{fig:patients-lattice}).
We have $\cl(\langle e, d_B, d_C \rangle) = \Mt \{x \in f_{\U}(\textrm{Patients}) \mid 
\langle e, d_B, d_C \rangle \leq x\} = \langle e, e, u \rangle$.
Then, $\langle s, e, u \rangle \leq \langle e, e, u \rangle$ and $\langle e, d_B, d_C \rangle 
\imp \langle s, e, u \rangle$ is a valid abstract FD.
Dually, consider the abstract FD $\langle a, e, c \rangle \imp \allowbreak \langle e, d_B, d_C \rangle$.
There exists $x$ in the lattice such that $\langle a ,e ,c \rangle \leq x$ but $\langle 
e, d_B, c \rangle \nleq x$ (take $x = \langle d_A, e, c \rangle$).
Therefore, $\textrm{Patients} \models \langle a, e, c \rangle \imp \langle e, d_B, c 
\rangle$ does not hold.
\end{example}

Now, we examine the connection between abstract functional dependencies and functional 
dependencies.
Let $r$ be a relation over a scheme context $\C_{\U}$.
Its associated abstract lattice is $\Lt_{r}$.
An interpretation $g$ over $\C_{\U}$ maps $\Lt_{r}$ to the set system $g(\Lt_r) = \{g(x) \mid x \in \Lt_r \}$ 
over $\U$.
We illustrate this interpretation in the next example.

\begin{example} \label{ex:abstract-closure}
We consider the scheme context of our running example.
Let $r$ be the relation presented in Table \ref{tab:abstract-closure}.
In Figure \ref{fig:abstract-closure}, we give the abstract lattice $\Lt_r$ and its 
interpretation $g_1(\Lt_r)$ using $g_1$.
Observe that $g_1(\Lt_r)$ is a closure system.
Hence, it can be represented by a set of functional dependencies.
In particular, it satisfies $C \imp A$ and $A \imp B$.

\begin{table}
\centering 
\includegraphics[scale=1.0, page=1]{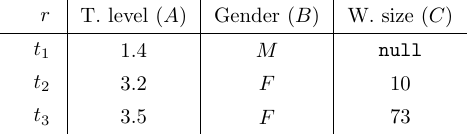}%
\caption{The relation $r$}
\label{tab:abstract-closure}
\end{table}

\begin{figure}[ht!]
\centering 
\includegraphics[scale=0.9, page=2]{Figures/abstract-closure.pdf}%
\caption{$\Lt_r$ and its interpretation $g_1(\Lt_r)$ using $g_1$}
\label{fig:abstract-closure}
\end{figure}

\end{example}

Thus, the interpretation of an abstract lattice may be a closure system, 
which can be represented by functional dependencies.
However, not all interpretations enjoy this property, as shown by the following example.

\begin{example} \label{ex:abstract-closure-fail}
We continue the previous example.
Instead of $g_1$, we consider the interpretation $g_1'$ depicted in Figure 
\ref{fig:abstract-closure-fail}.
The resulting set system $g_1'(\Lt_r)$ is not closed under intersection.
Henceforth, it is not a closure system, and it cannot be represented by functional 
dependencies.

\begin{figure}[ht!]
\centering 
\includegraphics[scale=0.9, page=3]{Figures/abstract-closure.pdf}%
\caption{$\Lt_r$ and its interpretation $g_1'(\Lt_r)$ using $g_1'$}
\label{fig:abstract-closure-fail}
\end{figure}
 
\end{example}

Thus, in view of functional dependencies, the following question arises: which 
property should a scheme interpretation have in order to guarantee that the 
interpretation of any abstract lattice is a closure system?
This question is central since functional dependencies must satisfy Armstrong axioms, 
and their model must form a closure system \cite{demetrovics1992functional}.
First, we answer the case where $\U$ has few attributes.

\begin{proposition} \label{prop:realities-small}
Let $\C_{\U}$ be a context scheme with $\U = \{A_1, \dots, A_n\}$.
If $n \leq 2$, then for any scheme interpretation $g$ and any $\Lt \in \Sb(\Lt_{\U})$, 
$g(\Lt)$ is a closure system over $\U$.
\end{proposition}

\begin{proof}
If $n = 2$, then for any $\Lt \in \Sb(\Lt_{\U})$, $g(\Lt)$ is a non-empty subset 
of $\{0, 1\}^2$ containing $\langle 1, 1 \rangle$ by definition of $\Sb(\Lt_{\U})$.
The only subset of $\{0, 1\}^2$ possessing $\langle 1, 1 \rangle$ which is not closed 
with respect to $\mt$ is $\{\langle 0, 1 \rangle, \langle 1, 0 \rangle, \langle1, 
1 \rangle \}$.
However, if $g(\Lt)$ contains $\langle 0, 1 \rangle$ and $\langle 1, 0 \rangle$, 
there exists $x, y \in \Lt$ such that $g(x) = \langle 0, 1 \rangle$ and $g(y) = \langle 
1, 0 \rangle$.
As $\Lt$ is a $\mt$-sublattice of $\Lt_{\U}$, and $g$ is increasing, it must be 
that $x \mt y \in \Lt$, $g(x \mt y) \leq g(x)$ and $g(x \mt y) \leq g(y)$ so that 
necessarily $g(x \mt y) = \langle 0, 0 \rangle$.
Therefore, $g(\Lt)$ is indeed a closure system, concluding the proof.
\end{proof}

In the more general case where the number of attributes is unbounded, this 
property may not hold, as shown by the previous example.
In the next theorem, we characterize such interpretations using lattice homomorphisms.

\begin{theorem} \label{3.2:thm:reality}
Let $\C_{\U}$ be a context scheme with $\U = \{A_1, \dots, A_n\}$, $n \geq 3$, 
and let 
$g$ be a scheme interpretation.
Then $g$ is a $\mt$-homomorphism if and only if for any $\Lt \in \Sb(\Lt_{\U})$, $g(\Lt)$ 
is a closure system over $\U$.
\end{theorem}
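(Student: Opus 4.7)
The plan is to prove both directions of the biconditional. For the forward direction, assume $g$ is a $\mt$-homomorphism and fix $\Lt \in \Sb(\Lt_{\U})$. For any $g(x), g(y) \in g(\Lt)$, I compute $g(x) \cap g(y) = g(x) \mt g(y) = g(x \mt y)$, and since $\Lt$ is a $\mt$-sublattice we get $x \mt y \in \Lt$, hence $g(x) \cap g(y) \in g(\Lt)$; combined with $g(1_{\U}) = \U \in g(\Lt)$, this shows $g(\Lt)$ is a closure system.

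For the converse I argue by contrapositive: assume $g$ is not a $\mt$-homomorphism and produce some $\Lt \in \Sb(\Lt_{\U})$ whose image fails to be closed under intersection. By \autoref{3.2:prop:map-meet}, at least one attribute component must fail, say (after renaming) $h_{A_1}$, witnessed by elements $x_1, y_1 \in \Lt_{A_1}$ with $h_{A_1}(x_1 \mt y_1) \neq h_{A_1}(x_1) \mt h_{A_1}(y_1)$. Since $h_{A_1}$ is increasing with codomain $\{0,1\}$, the only way this inequality can happen is $h_{A_1}(x_1) = h_{A_1}(y_1) = 1$ while $h_{A_1}(x_1 \mt y_1) = 0$.

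The main step, and the only place where the hypothesis $n \geq 3$ is used, is to build a four-element witness by using two spare attributes $A_2, A_3$ to swap $0$ and $1$, in the spirit of the counterexample preceding the theorem. I set
\[ x = \langle x_1, 1_{A_2}, 0_{A_3}, 1_{A_4}, \dots, 1_{A_n} \rangle, \qquad y = \langle y_1, 0_{A_2}, 1_{A_3}, 1_{A_4}, \dots, 1_{A_n} \rangle, \]
and take $\Lt = \{1_{\U}, x, y, x \mt y\}$, which is immediately a $\mt$-sublattice of $\Lt_{\U}$. A direct computation then gives $g(x) = \U \setminus \{A_3\}$, $g(y) = \U \setminus \{A_2\}$, $g(x \mt y) = \{A_4, \dots, A_n\}$ (using $h_{A_1}(x_1 \mt y_1) = 0$), and $g(1_{\U}) = \U$, whereas $g(x) \cap g(y) = \U \setminus \{A_2, A_3\}$ coincides with none of these four elements, so $g(\Lt)$ is not a closure system.

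The only real obstacle is checking this last distinctness claim, and that is exactly where $n \geq 3$ matters: reserving two distinct attributes $A_2, A_3$ on which $x$ and $y$ carry opposite bits ensures that $g(x) \cap g(y)$ keeps the $A_1$-bit from both while losing both of $A_2, A_3$, so it cannot collapse onto $g(x \mt y)$ nor onto $g(1_{\U})$. For $n = 2$ the construction breaks down, consistent with \autoref{3.2:prop:real-small}, which is precisely why the statement restricts to $n \geq 3$.
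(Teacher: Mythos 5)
Your proof is correct and follows essentially the same route as the paper: the forward direction via $g(x\mt y)=g(x)\mt g(y)$ together with the top element, and the converse by contrapositive, extracting a non-$\mt$-homomorphic component $h_{A_1}$ via \autoref{3.2:prop:map-meet} and building the identical four-element $\mt$-sublattice $\{1_{\U}, x, y, x\mt y\}$ with opposite bits on two spare attributes. The witness set and the distinctness computation match the paper's proof exactly.
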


\begin{proof}
We begin with the only if part. 
Let $g$ be a $\mt$-homomorphism and $\Lt \in \Sb(\Lt_{\U})$.
Consider $x, y \in \Lt$ such that $g(x)$ and $g(y)$ are incomparable.
Then, since $\Lt \in \Sb(\Lt_{\U})$ and $g$ is a $\mt$-homomorphism, it follows 
that 
$x 
\mt y \in \Lt$ and $g(x \mt y) = g(x) \mt g(y) \in g(\Lt)$.
Furthermore, the top element $\langle 1_{1}, \dots, 1_n \rangle$ belongs to $\Lt$ 
by 
assumption and $g(\langle 1_{1}, \dots, 1_n \rangle) = \U \in g(\Lt)$ by 
definition 
of 
$g$.
Hence $g(\Lt)$ is a closure system over $\U$.

For the if part, we use contrapositive.
Suppose $g$ is not a $\mt$-homomorphism.
We construct $\Lt \in \Sb(\Lt_{\U})$ such that $g(\Lt)$ is not a closure system.
Recall by Proposition \ref{prop:morphisms} that if $g$ is not a $\mt$-homomorphism, 
then 
it must be that at least one of the $h_A$, $A \in \U$ is not a $\mt$-homomorphism 
either. 
Without loss of generality, assume that $A = A_1$.
Since $h_{A_1}$ is increasing but not $\mt$-homomorphic, there must be $x_1, y_1 
\in 
\Lt_{A_1}$ 
such $h_{A_1}(x_1) = h_{A_1}(y_1) = 1$ and $h_{A_1}(x_1 \mt y_1) = 0$.
Recall that we assumed $n \geq 3$.
Let $\Lt$ be the four element set:
\begin{equation*}
	\begin{split}
		\{ 
		\langle x_1, 1_2, 0_3, 1_4, \dots, 1_n \rangle,
		\langle y_1, 0_2, 1_3, 1_4, \dots, 1_n \rangle,
		\langle x_1 \mt y_1, 0_2, 0_3, 1_4, \dots, 1_n \rangle, \\
		\langle 1_1, 1_2, 1_3, 1_4, \dots, 1_n \rangle
		\}	
	\end{split}
\end{equation*}
Note that $\Lt$ is a $\mt$-sublattice of $\Lt_{\U}$ which contains its 
top element.
Since $g$ is increasing and we constrained $h_{A_i}(0_{A_i}) = 0$ and 
$h_{A_i}(1_{A_i}) = 1$, we obtain:
\[g(\Lt) = \{ 
\langle 1, 1, 0, 1, \dots, 1 \rangle,
\langle 1, 0, 1, 1, \dots, 1 \rangle,
\langle 0, 0, 0, 1, \dots, 1 \rangle,
\langle 1, 1, 1, 1, \dots, 1 \rangle 
\} \]
which is not a closure system when interpreted as attribute sets, thus concluding 
the 
proof.
\end{proof}

For each attribute, the $\mt$-homomorphism property defines a minimum element in 
the abstract lattice which is interpreted as $1$. 
Still, $\mt$-homomorphisms only partially capture the intuitive behavior that an 
interpretation could have.
Indeed, while the meet of two abstract values set to $1$ equals $1$ by 
$\mt$-preservation of $g$, the join of two abstract values interpreted as $0$ could be 
$1$.
Hence, we are led to consider two types of interpretations.
First, $\mt$-homomorphisms, which guarantee the semantic for functional dependencies 
thanks to Theorem \ref{3.2:thm:reality}.
Second, homomorphisms, a strengthening of $\mt$-homomorphisms.
These types of interpretations are called realities and strong realities, respectively.

\begin{definition}[reality, strong reality] \label{def:realities}
Let $\C_{\U}$ be a scheme context.
A scheme interpretation $g$ is a \emph{reality} if it is a $\mt$-homomorphism over 
$\Lt_{\U}$.
It is a \emph{strong reality} if it is a homomorphism.
\end{definition}

We denote by $\cc{R}$ the set of realities of $\Lt_{\U}$.
The set of strong realities is $\cc{R}_s$.
For a given reality $g$, let $x_g$ be the abstract tuple of $ 
\Lt_{\U}$ satisfying $x_g[A] = \csf{min}_{\leq}\{x_A \in \Lt_A \mid g_{\mid A}(x_A) = 
1\}$ for every attribute $A$.
As $g$ is an increasing $\mt$-homomorphism, such an $x_g$ is uniquely defined.
Moreover, $x_g[A] \neq 0_A$ for every $A \in \U$.

\begin{remark}
Let $r$ be a relation over a scheme context $\C_{\U}$, and let $g$ be a reality.
As we previously discussed, $\Lt_r$ induces a closure operator from $\Lt_{\U}$ to $\Lt_r$.
Since $g(\Lt_r)$ is a closure system by Theorem \ref{3.2:thm:reality}, it also induces a 
closure operator from the powerset of $\U$ to $g(\Lt_r)$.
In the remaining of this subsection, we write $\cl$ as the closure operator of $\Lt_r$ 
and $\cl_g$ as the closure operator of $g(\Lt_r)$ to avoid confusion.
\end{remark}

\begin{example}
We illustrate in Figure \ref{fig:realities} the six possible realities of our running 
example, along with the interpretations of the abstract lattice associated with the 
relation $\textrm{Patients}$ (see Figure \ref{fig:patients-lattice}).
Among these, $g_2$, $g_3$, $g_5$ and $g_6$ are strong realities.
For example, we have $x_{g_1} = \langle e, e, c \rangle$, $x_{g_2} = \langle d_A, e, c 
\rangle$, and  $x_{g_3} = \langle s, e, u \rangle$.
\begin{figure}[ht!]
\centering
\includegraphics[scale=0.8, page=1]{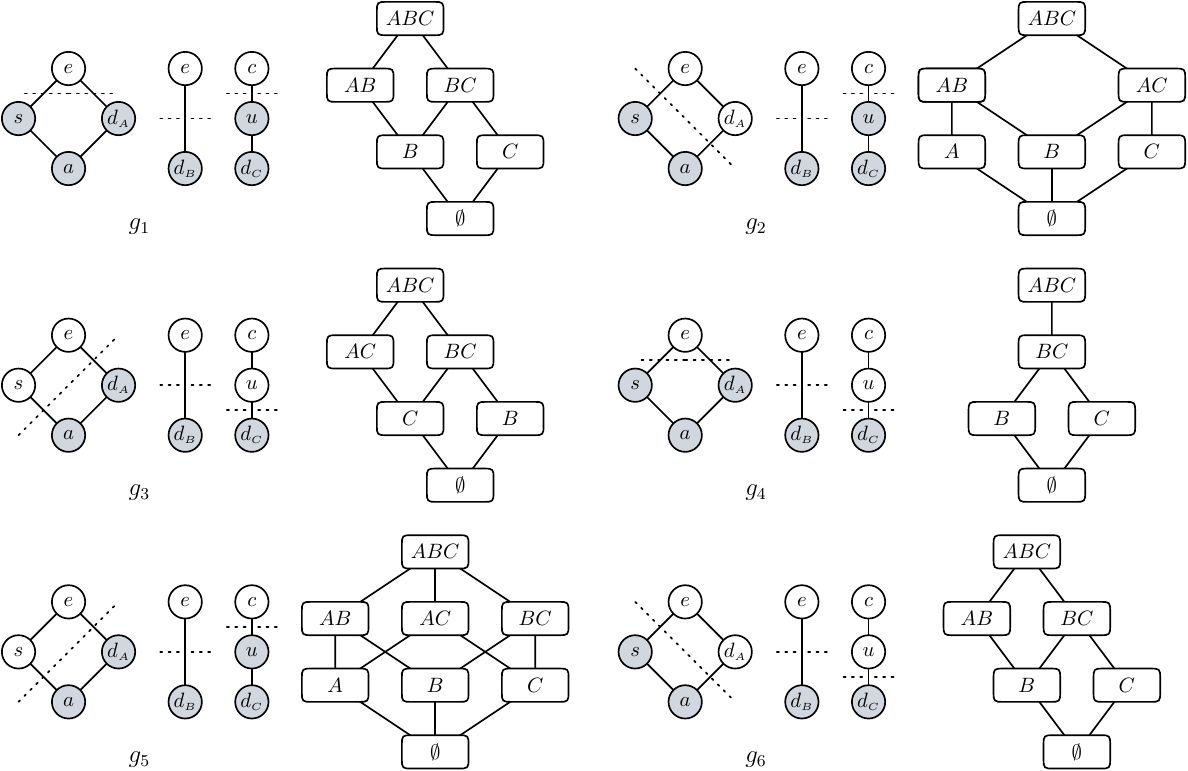}%
\caption{Realities and interpretations of the abstract lattice 
associated with the relation $\textrm{Patients}$}
\label{fig:realities}
\end{figure}
\end{example}

Now we give some properties of strong realities.
First, unlike normal realities, there may be cases where $\cc{R}_s = 
\emptyset$, as shown by the following example.

\begin{example}
Let $\U = \{A, B, C\}$ and $\Lt_A, \Lt_B, \Lt_C$ be the abstract lattices of Figure 
\ref{fig:ex-no-strong}. 
Due to $\Lt_A$ no strong reality is possible: for any attribute interpretation 
$h_A$, 
each element $a, b, c$ of $\Lt_A$ must be assigned a binary value so that 
at least two elements will have the same value, say $a, b$.
However, the homomorphism property implies that $h_A(a \jn b) = h_A(a) \jn h_A(b) 
= h_A(a) \mt h_A(b) = h_A(a \mt b)$ so that either $h_A(0) = 1$ or $h_A(1) = 0$, a 
contradiction 
with the definition of an interpretation.
\begin{figure}[ht!]
	\centering
	\includegraphics[scale=1.0]{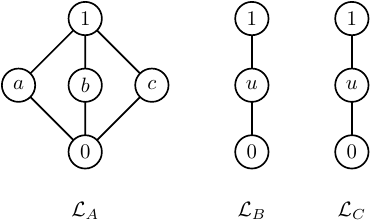}
	\caption{abstract lattices without strong realities}
	\label{fig:ex-no-strong}
\end{figure}
\end{example}

Let us assume that $\cc{R}_s$ is non-empty.
We show that strong realities are related to prime/co-prime decomposition pairs 
in lattices \cite{markowsky1992primes}.
Let $\Lt$ be a lattice, and $x \in \Lt$. 
We say that $x$ is \emph{prime} if for any $y, z \in \Lt$, $y \mt z \leq x$ implies 
either $y \leq x$ or $z \leq x$.
The term \emph{co-prime} is defined dually with $\jn$.
The bottom element of $\Lt$ cannot be co-prime, and the top element 
cannot be prime.
It is known from \cite[Theorem 6]{markowsky1992primes}, that $\Lt$ has a prime element 
$p$ 
if and only if it has a co-prime element $c$.
Actually, if $p$ is prime, then it has a unique associated co-prime element $c = 
\csf{min}_{\leq}\{x \in \Lt \mid x \nleq p\}$.
Dually, if $c$ is co-prime, it has a unique associated prime element $p = 
\csf{max}_{\leq}\{x \in \Lt \mid x \ngeq c\}$.
This correspondence between primes and co-primes is bijective.
Furthermore, a pair $(c, p)$ splits $\Lt$ into two disjoint parts  $\{x \in \Lt \mid x 
\geq c \} $ and $\{ x \in \Lt \mid x \leq p\}$ such that $\{x \in \Lt \mid c \leq x 
\}  
\cup \{ x \in \Lt \mid x \leq p\} = \Lt$.
For any $A \in \U$, if $\Lt_A$ has a pair $c, p$ of coprime/prime elements, we shall 
denote it $(c, p)_A$.
Furthermore, the set of co-prime (resp.~ prime) elements of $\Lt_A$ is denoted by 
$\coPm(\Lt_A)$ (resp.~ $\Pm(\Lt_A)$).
Now, we establish the relationship between strong realities and co-prime/prime 
pairs.

\begin{proposition} \label{prop:coprime-prime}
	Let $\C_{\U}$ be a scheme context, and $g$ a reality. 
	Then, $g$ is a strong reality if and only if for any $A \in \U$, there are 
	co-prime/prime pairs $(c, p)_A$ such that $c = \csf{min}_{\leq}\{x_A \in \Lt_A 
	\mid 
	g_{\mid A}(x_A) = 1\}$ and $p = \csf{max}_{\leq}\{x_A \in \Lt_A \mid g_{\mid 
	A}(x_A) =  
	0\}$.
\end{proposition}

\begin{proof}
We begin with the only if part.
Let $g \in \cc{R}_s$ and consider $g_{\mid A}$ for some $A \in \U$.
We show that $g_{\mid A}$ induces a co-prime/prime pair $(c, p)_A$.
We put $c = \Mt g_{\mid A}^{-1}(1)$ and $p = \Jn g_{\mid A}^{-1}(0)$.
Note that as $g(0_A) = 0$ and $g(1_A) = 1$ by definition, $g_{\mid A}^{-1}(1)$ 
and 
$g_{\mid A}^{-1}(0)$ are non-empty.
Furthermore, since $g$ is increasing and homomorphic, both $p$ and $c$ are 
well-defined and satisfy $g_{\mid A}(p) = 0$ and $g_{\mid A}(c) = 1$.
Let us prove that $p$ is prime. 
Let $x_A, y_A \in \Lt_A$, such that $x_A \mt y_A \leq p$.
As $x_A \mt y_A \leq p$, it must be that $g_{\mid A}(x_A \mt y_A) = 0$ and either 
$g_{\mid A}(x_A) = 0$ or $g_{\mid A}(y_A) = 0$ since otherwise we would break the 
$\mt$-homomorphic property of $g$.
By construction of $p$ then, it follows that either $x_A \leq p$ or $y_A \leq p$.
A similar reasoning can be applied to $c$.
We show that $\csf{min}_{\leq}\{x_A \in \Lt_A \mid x_A \nleq p\}$ is a singleton 
element 
and that it equals $c$.
Let $y_A, z_A \in \{x_A \in \Lt_A \mid x_A \nleq p\}$.
By construction of $p$, and since $g$ is an homomorphism, it must be that $g_{\mid 
A}(y_A) 
= g_{\mid_A}(z_A) = 1$ so that $y_A \mt z_A \in \{x_A \in \Lt_A \mid x_A \nleq 
p\}$.
Therefore, this set has a unique minimum element.
By definition of $c$, we also have that $c \leq y_A \mt z_A$ and $c = 
\csf{min}_{\leq}\{x_A \in \Lt_A \mid x_A \nleq p\}$ holds.
Similarly we obtain $p = \csf{max}_{\leq}\{x_A \in \Lt_A \mid x \ngeq c\}$.
Thus $(c, p)_A$ is indeed a co-prime/prime pair of $\Lt_A$.
As these arguments can be applied for any $A \in \U$, the only if part of the 
statement follows.

Now we move to the if part.
Let $A \in \U$ and $(c, p)_A$ be a co-prime/prime pair of $\Lt_A$.
Let us put $h_A$ as follows:
\[
h_A(x_A) = \begin{cases}
	1 & \text{ if } x_A \geq c \\
	0 & \text{ if } x_A \leq p \\
\end{cases}
\]
We have to check that such a definition covers the whole lattice $\Lt_A$ in a 
satisfying way.
First, Let $x_A \in \Lt_A$ such that $x_A \nleq p$.
As $p$ is prime and $(c, p)_A$ is a co-prime/prime pair, we have that $c = 
\csf{min}_{\leq}\{y_A \in \Lt_A \mid y_A \nleq p\}$ so that $x_A \geq c$.
Hence, any $x_A \in \Lt_A$ is uniquely mapped to $0$ or $1$ and $h_A$ is an 
increasing 
mapping.
As for the homomorphism property, we show the $\jn$-homomorphic side of $h_A$.
As $h_A$ is increasing, the case where $h_A(x_A) = h_A(y_A) = 1$ is already 
covered.
However, if $h_A(x_A) = h_A(y_A) = 0$, then we have $x_A \leq p$ and $y_A \leq p$ 
so that $x_A \jn y_A \leq p$ and hence $h_A(x_A \jn y_A) = 0$ by definition of $h_A$.
The $\mt$-homomorphic side of $h_A$ is proved analogously.
Now, if we consider $g = \langle h_{A_1}, \dots, h_{A_n}  \rangle$ we obtain a strong 
reality in virtue of Proposition \ref{prop:morphisms} and Definition 
\ref{def:realities}.
\end{proof}

Thus, any strong reality is the result of the choice of a co-prime/prime pair in each 
abstract lattice.
More formally, for any $A \in \U$ and any $c \in \coPm(\Lt_A)$, let us define 
$h_A^c(x_A) = 1$ if $x_A \geq c$ and $0$ if $x_A \leq p$ where $p$ is the prime 
element 
associated with $c$ in $\Lt_A$.
Then, we have:
\[
\cc{R}_s = \prod_{A \in \U} \{h_A^c \mid c \in \coPm(\Lt_A) \}
\]

%
%

With (strong) realities, we can now study the relationship between abstract FDs and 
classical FDs.
As we have seen, any relation is associated with an abstract lattice from which abstract 
FDs can be derived.
Furthermore, any (strong) reality gives a semantic for classical FDs when the 
abstract lattice is interpreted.
First, we settle the notion of satisfaction of an abstract FD through a reality.

\begin{definition}[Satisfaction] \label{def:sat-fd-reality}
Let $\C_{\U}$ be a scheme context, $r$ a relation over $\C_{\U}$, $g$ a reality 
and $x \imp y$ an abstract FD.
The functional dependency $g(x) \imp g(y)$ is satisfied in $r$, denoted by $r 
\models_{\C_{\U}, g} g(x) \imp g(y)$ (or simply $r \models_{g} g(x) \imp g(y)$) 
if for all $t_1, t_2 \in r$, $g(x) \subseteq g(f_{\U}(t_1, t_2))$ implies $g(y) 
\subseteq g(f_{\U}(t_1, t_2))$.
\end{definition}

Note that in the interpretation $g(x) \imp g(y)$ of an abstract FD $x 
\imp y$, $g(y)$ may coincide with a set $Y$ instead of an attribute $A$.
Without loss of generality, a FD of the form $X \imp Y$ should be read as the 
equivalent set of FDs $\{ X \imp A \mid A \in Y \}$.

We first give a proposition similar to Proposition \ref{prop:sat-abstract-fd}.
This allows us to work directly on the lattice $\Lt_r$ induced by $r$ without 
loss of generality with respect to FDs.

\begin{proposition} \label{prop:sat-fd-reality-1}
Let $r$ be a relation over $\C_{\U}$, $g$ a reality, and $x \imp y$ an abstract 
FD.
We have $r \models_g g(x) \imp g(y)$ if and only if for any $z \in \Lt_r$, $g(x) 
\subseteq g(z)$ implies $g(y) \subseteq g(z)$, denoted $\Lt_r \models_g g(x) \imp g(y)$.
\end{proposition}

In what follows, we will need the next definition.

\begin{definition}\label{def:projection-g}
Let $g$ be a reality.
The \emph{projection} $\pi_g(x)$ of $x \in \Lt_{\U}$ on $g$ is given as 
follows: for any $A \in \U$, $\pi_g(x)[A] = x_g[A]$ if $x[A] \geq x_g[A]$, and 
$0_A$ otherwise.
\end{definition}

We prove in Theorem \ref{thm:projection-g} that the interpretation of a valid abstract 
FD of $\Lt_r$ is true for $g$ if its projection on $g$ is still a valid abstract FD for 
$\Lt_r$.
We need the following technical lemma first.

\begin{lemma} \label{lem:projection-g}
Let $r$ be a relation over a scheme context $\C_{\U}$.
Let $x \in \Lt_{\U}$, and $g$ be a reality. 
Then $g(\cl(\pi_g(x))) = \cl_g(g(\pi_g(x))) = \cl_g(g(x))$.
\end{lemma}

\begin{theorem} \label{thm:projection-g}
Let $r$ be a relation over a scheme context $\C_{\U}$.
Let $g$ be a reality, and $x \imp y$ an abstract FD.
Then $\Lt_r \models_g g(x) \allowbreak \imp g(y)$ if and only if $\Lt_r \models 
\pi_g(x) \imp \pi_g(y)$.
\end{theorem}

\begin{proof}
Let $g$ be a reality, and let $x \imp y$ be an abstract FD.
We prove the only if part.
Let us assume that $\Lt_r \models_g g(x) \imp g(y)$.
As $g(x) = g(\pi_g(x))$ and $g(y) = g(\pi_g(y))$, we also have that $\Lt_r 
\models_g 
g(\pi_g(x)) \imp g(\pi_g(y))$.
We show that $\Lt_r \models \pi_g(x) \imp \pi_g(y)$.
Since $\Lt_r \models_g g(\pi_g(x)) \imp g(\pi_g(y))$, we have that $g(\pi_g(y)) 
\subseteq 
\cl_g(g(\pi_g(x))) = g(\cl(\pi_g(x)))$ by Lemma \ref{lem:projection-g}.
We prove that $\pi_g(y) \leq \cl(\pi_g(x))$.
Let $A \in \U$.
If $A \notin g(\pi_g(y))$, then we have $\pi_g(y)[A] = 0_A$ by construction of 
$\pi_g$ 
and hence $\pi_g(y)[A] \leq_A \cl(\pi_g(x))[A]$.
If $A \in g(\pi_g(y)) \subseteq g(\cl(\pi_g(x)))$, then $\pi_g(y)[A] \leq_A 
\cl(\pi_g(x))[A]$ holds again by construction of $\pi_g$.
Therefore,  $\pi_g(y) \leq \cl(\pi_g(x)) $ and $\Lt_r \models \pi_g(x) \imp 
\pi_g(y)$ follows.

We move to the if part.
Let us assume that $\Lt_r \models \pi_g(x) \imp \pi_g(y)$.
We have that $\pi_g(y) \leq \cl(\pi_g(x))$.
Then, $g(\pi_g(y)) \subseteq g(\cl(\pi_g(x)))$ as $g$ is increasing, and 
$g(\cl(\pi_g(x))) = \cl_g(g(\pi_g(x))) = \cl_g(g(x))$ by Lemma \ref{lem:projection-g}.
As $g(\pi_g(y)) = g(y)$, we have that $g(y) \subseteq \cl_g(g(x))$ and $\Lt_r 
\models_g g(x) \imp g(y)$ follows.
\end{proof}

Moreover, if an abstract FD holds, then it reflects a potential dependency on $r$, 
meaning that there should be a reality which turns it into a valid FD.
This is the aim of the next proposition.

\begin{proposition} \label{prop:valid-afd}
Let $\C_{\U}$ be a schema context, $r$ be a relation over $\C_{\U}$, and $x \imp 
y$ be an abstract FD.
If $r \models x \imp y$, then there exists a reality $g$ such that $r \models_g g(x) \imp 
g(y)$.
\end{proposition}

\begin{proof}
Assume that $r \models x \imp y$.
By Proposition \ref{prop:sat-fd-reality-1}, this is equivalent to $\Lt_r \models x \imp 
y$ where $\Lt_r$ is the 
abstract lattice associated with $r$.
Let $g$ be a reality satisfying $\pi_g(x) = x$.
Note that such a reality always exists as it only requires to put $x_g[A] = x[A]$ 		
for every $A \in \U$ such that $x[A] \neq 0_A$.
Then, by Lemma \ref{lem:projection-g}, $g(\cl(x)) = \cl_g(g(x)$.
Since $\Lt_r \models x \imp y$, we have that $y \leq \cl(x)$ and therefore $g(y) 
\subseteq g(\cl(x)) = \cl_g(g(x))$ so that $\Lt_r \models_g g(x) \imp g(y)$.
We deduce that $r \models g(x) \imp g(y)$ as expected.
\end{proof}

To conclude this subsection, we illustrate Proposition \ref{prop:valid-afd} on our 
running example.

\begin{example}
We consider our running example.
Let $x \imp y$ be the abstract FD where $x = \langle e, d_B, u \rangle$ and $y =  
\langle s, e, d_C \rangle$.
It is a valid abstract FD of the lattice associated to $\textrm{Patients}$ (and hence of 
$\textrm{Patients}$).
Thus, according to Proposition \ref{prop:valid-afd}, there must exist a reality $g$ for 
which the 
FD $g(x) \imp g(y)$ is valid in $r$. 
This is the case, for instance, of the reality $g_1$ depicted to the left of Figure 
\ref{fig:interpretations} (see also Figure \ref{fig:realities}). 
More precisely, we have $g_1(x) = A$, $g_1(y) = B$, and $r \models_{g_1} A \imp B$.
\end{example}

\subsection{Possible and certain FDs}
\label{sec:possible}

In this part, we are interested in functional dependencies from another 
point of view.
We are given a relation $r$ over a scheme context $\C_{\U}$, and a functional 
dependency 
$X \imp A$.
Let us remind that, in our terms, a functional dependency $X \imp A$ means a pair $(X, 
A)$ where $X \subseteq \U$ and $A \in \U$. 
Since any reality $g$ maps $\Lt_r$ to a closure system, it is natural to wonder about the 
functional dependencies holding in $g(\Lt_r)$. 
Beforehand, we formally define the meaning of a FD holding through a reality. 
This will mainly be a reformulation of Definition \ref{def:sat-fd-reality} with attribute 
sets 
instead of abstract tuples.

\begin{definition} \label{def:fd-glr}
Let $r$ be a relation over $\C_{\U}$, $g$ a reality, $X \subseteq \U$, and $A \in 
\U$.
Then, $X \imp A$ is satisfied in $r$ with respect to $\C_{\U}$ and $g$, denoted $r 
\models_{g} X \imp A$, if for every $t_1, t_2 \in r$,  $X \subseteq g(f_{\U}(t_1, 
t_2))$ implies $A \in g(f_{\U}(t_1, t_2))$.
\end{definition}

Proposition \ref{prop:sat-abstract-fd} can also be reformulated in this way, thus 
introducing the notation $\Lt_r \models_g X \imp A$ if for any $x \in \Lt_r$, $X 
\subseteq g(x)$ implies $A \in g(x)$.

\begin{proposition} \label{prop:sat-fd-reality-2}
	Let $r$ be a relation over $\C_{\U}$, $g$ a reality, $X \subseteq \U$, and $A \in 
	\U$.
	We have $r \models_g X \imp A$ if and only if $\Lt_r \models_g X \imp A$.
\end{proposition}

\begin{proof}
	Let $x, a \in \Lt_{\U}$ such that $g(x) = X$ and $g(a) = A$.
	Note that $x, a$ must exist as $g(\Lt_{\U}) = 2^{\U}$.
	The result follows from \ref{prop:sat-abstract-fd} applied to $x, a$.
\end{proof}

There are many different realities for a given $\C_{\U}$. 
Therefore, the same FD $X \imp A$ may or may not hold depending on the reality we 
choose. 
However, there are also some FDs that will hold for every reality such as the trivial 
case $X \imp A$, $A \in X$.
Thus, inspired by certain query answering, we define certain functional dependencies 
as those that are true for any reality.
Similarly, a functional dependency is possible if it holds for at least one reality.
They are defined along with decision problems.

\begin{definition}[certain FD, strongly certain FD]
Let $r$ be a relation over $\C_{\U}$, and $X \imp A$ a functional dependency.
We say that $X \imp A$ is \emph{certain} (resp.~ \emph{strongly certain}) if for 
any reality $g \in \cc{R}$ (resp.~ strong reality $g \in \cc{R}_s$), $r \models_g X 
\imp A$ holds.
\end{definition}

\Problem{Certain Functional Dependency (CFD)}
{A scheme context $\C_{\U}$, a relation $r$ over $\C_{\U}$, and a functional 
dependency $X 
	\imp A$.}
{Is $X \imp A$ certain with respect to $r, \C_{\U}$?}

\Problem{Strongly Certain Functional Dependency (SCFD)}
{A scheme context $\C_{\U}$, a relation $r$ over $\C_{\U}$, and a functional 
	dependency $X \imp A$.}
{Is $X \imp A$ strongly certain with respect to $r, \C_{\U}$?}

\begin{definition}[possible FD, strongly possible FD]
Let $r$ be a relation over $\C_{\U}$, and $X \imp A$ a functional dependency.
We say that $X \imp A$ is \emph{possible} (resp.~ \emph{strongly possible}) if 
there exists a reality $g \in \cc{R}$ (resp.~ a strong reality $g \in \cc{R}_s$) such 
that $r \models_g X \imp A$ holds.
\end{definition}

\Problem{Possible Functional Dependency (PFD)}
{A scheme context $\C_{\U}$, a relation $r$ over $\C_{\U}$, and a functional 
dependency $X \imp A$.}
{Is $X \imp A$ possible with respect to $r, \C_{\U}$?}

\Problem{Strongly Possible Functional Dependency (SPFD)}
{A scheme context $\C_{\U}$, a relation $r$ over $\C_{\U}$, and a functional 
dependency $X \imp A$.}
{Is $X \imp A$ strongly possible with respect to $r, \C_{\U}$?}

\begin{remark}
All the complexity statements of this section are based on the complexity 
assumption given after Definition \ref{def:context}.
\end{remark}

Before discussing the complexity of these problems we give an illustration of 
possible and certain functional dependencies.

\begin{example}
We continue the running example.
In Table \ref{tab:fd-holding}, we list the (left-minimal) non-trivial FDs holding in 
$\textrm{Patients}$, 
according to the realities $g_1$ to $g_6$ given in Figure \ref{fig:realities}.
It follows that all of them are possible, since they hold in at least one reality.
On the other hand, no non-trivial FD holds in every reality.
Thus, certain FDs of $\textrm{Patients}$ are of the form $X \imp Y$ where $Y \subseteq X$.
\begin{table}[ht!]
\centering 
\includegraphics[scale=1.0, page=1]{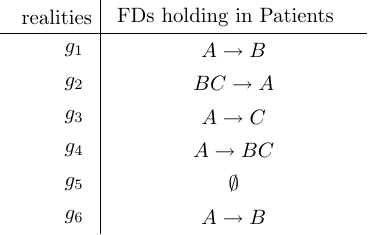}%
\caption{Non-trivial FDs holding in $\textrm{Patients}$, depending on the 
reality}
\label{tab:fd-holding}
\end{table}
\end{example}

\begin{remark}
A functional dependency $X \imp A$ which is valid in the classical sense 
(with respect to equality) needs not be possible in a given scheme context.
Whether or not $X \imp A$ will be possible depends on how the comparability functions are 
given.
In practice, this is not a problem: if the comparability functions provided by 
domain experts do not allow to identify classical functional dependencies, it means that 
usual FDs may not be relevant with respect to their data.
\end{remark}

Now, we show that the number of realities can be exponential. 

\begin{proposition} \label{4:prop:expo}
Let $\U$ be a relation scheme, and $\C_{\U}$ a scheme context. 
Then, the number of (strong) realities can be exponential in the size of $\U$ and 
$\C_{\U}$.
\end{proposition}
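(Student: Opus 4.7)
The plan is to exhibit, for each $n \geq 1$, a schema context $\C_{\U}$ with $|\U|=n$ whose total size (truth lattices plus comparability functions) is $O(n)$, yet which admits at least $2^n$ distinct realities and at least $2^n$ distinct strong realities. By \autoref{3.2:prop:map-meet}, a schema interpretation $g$ is a (strong) reality precisely when each of its attribute components $h_A$ is a $\mt$-homomorphism (resp.\ homomorphism) attribute interpretation, so the set of (strong) realities decomposes as a direct product over $A\in\U$ of the valid attribute interpretations. It therefore suffices to arrange that each attribute independently offers at least two valid attribute interpretations.

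Concretely, take each truth lattice $\Lt_A$ to be a copy of the four-element Boolean lattice $B_2$ with elements $\{0_A, a_A, b_A, 1_A\}$ and $a_A\ncmp b_A$. For strong realities, \autoref{3.2:prop:coprime-prime} identifies them with choices of one coprime element per attribute; in $B_2$, both atoms $a_A$ and $b_A$ are coprime, since if $y\jn z\geq a_A$ with $y,z\in\{0_A,b_A\}$ then $y\jn z\in\{0_A,b_A\}$ cannot dominate $a_A$. Hence each attribute contributes at least two strong attribute interpretations, and $|\cc{R}_s|\geq 2^n$. For ordinary realities, a valid $h_A$ is an increasing $\mt$-homomorphism onto $\{0,1\}$ with $h_A(0_A)=0$ and $h_A(1_A)=1$; equivalently, $h_A^{-1}(1)$ is a proper filter of $\Lt_A$ avoiding $0_A$. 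In $B_2$, the filters generated by $a_A$ and by $b_A$ are both of this form, giving at least two choices per attribute and $|\cc{R}|\geq 2^n$.

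For the comparability functions I set $\dom(A)=\{0,1,2\}$ for every $A$ and define $f_A(x,x)=1_A$, $f_A(0,1)=a_A$, $f_A(0,2)=b_A$, $f_A(1,2)=0_A$, extended by commutativity. This is surjective onto $\Lt_A$, commutative, and subsumes equality, hence a valid comparability function. Each attribute context $\C_A=\{f_A,\Lt_A\}$ has constant size, so $|\C_{\U}|=O(n)$ and $|\U|=n$. Combining, the number of (strong) realities is at least $2^n$ while $|\U|+|\C_{\U}|=O(n)$, yielding the exponential lower bound.

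There is no genuine obstacle here: the two things to check are the coprimality of the atoms of $B_2$, which is immediate from its structure, and the independence of per-attribute choices, which is delivered by \autoref{3.2:prop:map-meet}. Everything else is a finite case check on a four-element lattice.
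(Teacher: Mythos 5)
Your proof is correct and follows essentially the same strategy as the paper: use \autoref{3.2:prop:map-meet} to decompose (strong) realities as a direct product of per-attribute interpretations, then exhibit a constant-size attribute context admitting at least two valid choices per attribute, yielding $2^n$ (strong) realities over a linear-size schema context. The only difference is the gadget: the paper uses the three-element chain $0 \leq u \leq 1$, on which every increasing interpretation is automatically a homomorphism, whereas you use the four-element Boolean lattice and verify via \autoref{3.2:prop:coprime-prime} that both atoms are co-prime; both work.
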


\begin{proof}
Let $\U$ be a relation scheme and $\C_{\U} = \{(A, f_A, \Lt_A) \mid A \in \U \}$ 
where $\Lt_A$ is the three-element lattice $0 \leq u \leq 1$ for each $A \in \U$.
Note that the size of $\C_{\U}$ is bounded by the size of $\U$.
Now, for each $\Lt_A$, there are only two possible interpretations $h_A^1$ and 
$h_A^2$:
\begin{enumerate}
	\item $h_A^1(0) = h_A^1(u) = 0$ and $h_A^1(1) = 1$,
	\item $h_A^2(0) = 0$ and $h_A^2(u) = h_A^2(1) = 1$.
\end{enumerate}
Clearly both $h_A^1$ and $h_A^2$ are homomorphisms.
Using Proposition \ref{prop:morphisms} and Theorem \ref{3.2:thm:reality} we 
can describe $\cc{R}$ as follows:
\[ \cc{R} = \left\{\langle h_{A_1}^{i_1}, h_{A_2}^{i_2}, \dots, h_{A_n}^{i_n} 
\rangle 
\mid i_k 
\in \{1, 2 \}, 1 \leq k \leq n \right\} 
\]
Thus, $\cc{R}$ is in bijection with all binary words of size $n$ on the $\{1, 2\}$ 
alphabet. 
Hence $\card{\cc{R}} = 2^n$.
\end{proof}

It follows that the problems of deciding whether $X \imp A$ is possible or certain cannot 
be solved in polynomial time with a brute force algorithm testing all the existing 
realities.

\subsubsection{Certain functional dependencies}

Now, we investigate the problems related to certain functional dependencies.
Note that thanks to Proposition \ref{prop:sat-fd-reality-2} we can use equivalently $r 
\models X \imp A$ and $\Lt_r \models X \imp A$.
To do so, we characterize certainty of a FD using the structure of $\Lt_r$.
As a reminder, $\coPm(\Lt_A)$ (resp.~ $\Pm(\Lt_A)$) denotes the set of co-prime 
(resp.~ prime) elements of the abstract lattice $\Lt_A$, for any $A \in \U$.

\begin{lemma} \label{lem:certain-FD}
	Let $r$ be a relation over $\C_{\U}$, and $X \imp A$ a functional dependency.
	Then:
	\begin{enumerate}
		\item $X \imp A$ is certain if and only if for any $x \in 
		\Lt_r$, either $x[A] = 1_A$ 
		or there exists $B \in X$ such that $x[B] = 0_B$.
		\item $X \imp A$ is strongly certain if and only if for any $x \in \Lt_r$
		either $x[A] \geq_A \Jn \coPm(\Lt_A)$ or there exists $B \in X$ such that 
		$x[B] 
		\ngeq_B c$ for any $c \in \coPm(\Lt_B)$.
	\end{enumerate}
\end{lemma}

Therefore, deciding whether a FD is certain amounts to checking the elements of $\Lt_r$.
Let $X \subseteq \U$.
We define $\psi_X \in \Lt_{\U}$ where $\psi_X[A] = 1_A$ if $A \in X$, and $0_A$ 
otherwise.
In other words, $\psi_X$ is the characteristic vector of $X$ in $\Lt_{\U}$.
We deduce in the next theorem that \csmc{CFD} can be solved in polynomial time.

\begin{theorem} \label{thm:CFD-P}
	\csmc{CFD} can be solved in polynomial time.
\end{theorem}

\begin{proof}
Let $(r, \C_{\U}, X \imp A)$ be an instance of \csmc{CFD}.
From Lemma \ref{lem:certain-FD}, we have to check that for any element $x$ in 
$\Lt_r$, either $x[A] = 1_A$ or there exists $B \in X$ such that $x[B] = 0_B$.
Let us set up $x \in \Lt_{\U}$ as follows:
\begin{itemize}
	\item $x[A] = 1_A$,
	\item $x[B] = 0_B$ for any $B \neq A$ (in particular for any $B \in X$)
\end{itemize}
We consider $\cl(x)$.
Remark that as for any $B \in \U$, $\Lt_B$ is given in the input, computing the $\mt$ 
operation in $\Lt_{\U}$ can be conducted in polynomial time in the size of $\C_{\U}$ by 
running over the elements of $\Lt_B$.

Note that any element $y \in \Lt_r$ which satisfies $y[A] = 1_A$ does not require 
further verification and furthermore, lies above $\cl(x)$ by construction of $x$.
Hence it remains to check that for any $y \ngeq \cl(x)$ in $\Lt_r$, there exists 
some $B \in X$ such that $x[B] = 0_B$.
We show that we only need to test this property for the set $\csf{max}_\leq\{y 
\ngeq \cl(x) \mid y \in \Lt_r\}$.
First, assume that there exists $m \in \csf{max}_\leq\{y \ngeq \cl(x) \mid y \in \Lt_r\}$ 
such that for any $B \in X$, $m[B] \neq 0_B$.
Then, the condition of Lemma \ref{lem:certain-FD} fails with $m$ and hence $X \imp A$ 
is not certain.
Now suppose that for any $m \in \csf{max}_\leq\{y \ngeq \cl(x) \mid y \in \Lt_r\}$, 
there exists some $B \in X$ such that $m[B] = 0_B$.
By definition, for any $y \ngeq \cl(x)$ there must exists some $m \in \csf{max}_\leq\{y 
\ngeq \cl(x) \mid y \in \Lt_r\}$ such that $y \leq m$.
As by assumption we supposed that there exists some $B \in X$ such that $m[B] = 
0_B$ and as $y \leq m$, $y[B] = 0_B$ follows.
Thus, for any $y \ngeq \cl(x)$ in $\Lt_r$, there exists some $B \in X$ such that 
$x[B] = 0_B$ if and only if for any $m \in \csf{max}_\leq\{y \ngeq \cl(x) \mid y \in 
\Lt_r\}$, there exists $B \in X$ such that $m[B] = 0_B$.

It remains to show that the set $\csf{max}_\leq\{y \ngeq \cl(x) \mid y \in \Lt_r\}$ 
can be checked efficiently.
First, observe that $\csf{max}_\leq\{y \ngeq \cl(x) \mid y \in \Lt_r\} \subseteq 
\cc{M}(\Lt_r)$.
Then, by construction of $\Lt_r$, any element $y$ in $\Lt_r \setminus f_{\U}(r)$ 
is obtained by taking the meet of a subset of $f_{\U}(r)$.
Thus, $y \notin \cc{M}(\Lt)$, and $\cc{M}(\Lt_r) \subseteq f_{\U}(r)$.
As the size of $f_{\U}(r)$ is bounded by $\card{r^2}$ and each verification takes 
$O(\card{\U})$ operations, we conclude that checking all the 
tuples in $f_{\U}(r)$ can be done in polynomial time in the size of $(r, 
\C_{\U}, X \imp A)$, thus concluding the proof.
\end{proof}

We prove a similar theorem for strongly certain functional dependencies.
However, in this case, we have to add a pre-processing step to compute co-prime
elements of the attribute lattice $\Lt_A$, for any $A \in \U$.

\begin{theorem} \label{thm:SCFD-P}
\csmc{SCFD} can be solved in polynomial time.
\end{theorem}

\begin{proof}
	Let $(r, \C_{\U}, X \imp A)$ be an instance of \csmc{SCFD}.
	We first need to compute, for any $B \in \U$, the set $\coPm(\Lt_B)$.
	As $\Lt_B$ is given in the input $\C_{\U}$, this can be done in polynomial time 
	in 
	the 
	size of $\C_{\U}$ by greedily checking for each element $x_B$ in $\Lt_B$ whether 
	$\csf{max}_{\leq}\{y_B \in \Lt_B \mid y_B \ngeq x_B \}$ is a singleton set.
	Therefore, we can already check that strong realities exist by Proposition
	\ref{prop:coprime-prime} in polynomial time.
	If there are no strong realities, the answer to \csmc{SCFD} is positive.
	
	If there are strong realities, we use the arguments of Theorem \ref{thm:CFD-P} 
	with $x \in \Lt_{\U}$ as follows:
	\begin{itemize}
		\item $x[A] = \Jn \coPm(\Lt_A)$,
		\item for any other $B \in \U$, $B \neq A$, $x[B] = 0_B$.
	\end{itemize}
	Note that instead of checking whether for any $m \in \csf{max}_\leq\{y \ngeq \cl(x) 
	\mid 
	y 
	\in \Lt_r\}$, there exists some $B \in X$ such that $m[B] = 0_B$, we check that 
	there 
	exists $B \in X$ such that $m[B] \ngeq c$ for any $c \in \coPm(\Lt_B)$.
	This amounts equivalently to assert that $m[B] < \Mt \Pm(\Lt_B)$ by definition of 
	co-prime/prime pairs decomposition.
	Again, as $\Lt_B$ is part of the input, computing $\Pm(\Lt_B)$ and $\Mt 
	\Pm(\Lt_B)$ 
	can be done in polynomial time in the size of $(r, \C_{\U}, X \imp A)$.
\end{proof}

\subsubsection{Possible functional dependencies}

Now, we settle the complexity of the problems \csmc{PFD} and \csmc{SPFD}.
Again, we need an intermediary structural lemma.
It is the dual of Lemma \ref{lem:certain-FD}.

\begin{lemma} \label{lem:possible-FD}
	Let $r$ be a relation over $\C_{\U}$, and $X \imp A$ a functional dependency.
	Then:
	\begin{enumerate}
		\item $X \imp A$ is not possible if and only if there exists $x \in \Lt_r$, 
		such that $x[A] = 0_A$ and $x[B] = 1_B$ for any $B \in X$.
		\item $X \imp A$ is strongly possible if and only if there exists $c_X \in 
		\Lt_{\U}$ such that $c_X[B] = c$, for some $c$ in $\coPm(\Lt_B)$, for every 
		$B 
		\in X$, $c_X[B] = 0_B$ for every $B \notin X$, and 
		such that $\cl(c_X)[A] \geq c$ for some $c \in \coPm(\Lt_A)$.
	\end{enumerate}
\end{lemma}

Hence, for a given FD $X \imp A$, \csmc{PFD} can be solved by checking $\cl(\psi_X)$. 
As a reminder, $\psi_X$ is the characteristic vector of $X$ in $\Lt_{\U}$, that is 
$\psi_X[A] = 1_A$ if $A \in X$, and $0_A$ otherwise.

\begin{theorem} \label{thm:PFD-P}
	\csmc{PFD} can be solved in polynomial time.
\end{theorem}

\begin{proof}
Let $(r, \C_{\U}, X \imp A)$ be an instance of \csmc{PFD}.
According to Lemma \ref{lem:possible-FD}, $X \imp A$ is not possible if and only if 
there is an element $x$ in $\Lt_r$ such that $x[B] = 1_B$ for any $B \in X$ and $x[A] = 
0_A$.
Since $f_{\U}(r)$ generates $\Lt_r$, we have that $\cl(\psi_X) = \bigwedge \{x \in 
f_{\U}(r) 
\mid \psi_X \leq x\}$.
Moreover, the abstract lattices of each attribute context are part of the input so that 
computing $f_{\U}(r)$ and the $\mt$ operation in each lattice can be conducted in 
polynomial time.
Thus, we can compute the closure $\cl(\psi_X)$ of $\psi_X$ (note that $\psi_X[A] = 0_A$) 
in polynomial time in the size of $r$ and $\C_{\U}$.
Then, either $\cl(\psi_X)[A] = 0_A$ in which case $X \imp A$ is not possible, or 
$\cl(\psi_X)[A] \neq 0_A$ and $X \imp A$ is possible.
\end{proof}

However, for strong possibility, one might have to test an exponential number 
of closures since we have to check for any possible combination of co-prime elements.
In fact, we prove in the following theorem that \csmc{SPFD} is intractable (reduction 
from \csmc{3SAT}).

\begin{theorem} \label{thm:SPFD-NP}
	\csmc{SPFD} is \csf{NP}-complete.
\end{theorem}

\begin{proof}
First we show that \csmc{SPFD} belongs to \csf{NP}.
A certificate is a strong reality $g \colon \Lt_{\U} \to \{0,1\}^n$ which can be 
represented by the abstract tuple $x_{g}$ of polynomial size.
Moreover, the satisfaction of a functional dependency $X \imp A$ can be tested in 
polynomial time by computing $g(f_{\U}(t, t'))$ for every pair of tuples $t, t'$ 
in the input relation $r$.
Thus, \csmc{SPFD} belongs to \csf{NP}.

To show \csf{NP}-hardness, we use a reduction from \csmc{3-SAT}.
Let $\varphi = \bigwedge_{j = 1}^m C_j$, $m \in \cb{N}$, be a 3-CNF over a set of 
variables 
$V = \{x_1, \dots, x_n\}$, $n \in \cb{N}$.
We assume without loss of generality that no clause of $\varphi$ contains both $x_i$ and 
$\bar{x_i}$, for each $1 \leq i \leq n$.
We construct a scheme context $\C_{\U}$, a relation $r$ and a functional dependency $X 
\imp A$ such that $\varphi$ is satisfiable if and only if there exists a strong reality 
$g$ 
(in $\C_{\U}$) such that $r \models_g X \imp A$.

First, we introduce an attribute $A_i$ for each $x_i \in V$.
Then, we define a relation scheme $\U = \{A_1, \dots, A_{n + 1}\}$ where $A_{n + 1}$ is 
yet another attribute.
For each $A_i \in \U$, we put $\dom(A_i) = \cb{N}$.
Now we define a scheme context $\C_{\U}$ as follows:
\begin{itemize}
\item for each $1 \leq i \leq n$, we put $\DAV{A_i} = \{0_i, a_i, \bar{a_i}, 
1_i\}$, 
$\Lt_{A_i} = (\DAV{A_i}), \leq_i)$ where $\leq_i$ is defined by $0 \leq_i a_i 
\leq 1_i$ and $0 \leq_i \bar{a_i} \leq_i 1$.
We define the comparability function $f_i$ associated to $A_i$ as follows:
\[ f_i(x,y) = \begin{cases}
	1_i       & \text{ if } x = y \\
	a_i       & \text{ if } \card{x - y} = 1 \\
	\bar{a_i} & \text{ if } \card{x - y} = 2 \\
	0_i       & \text{ otherwise.}
\end{cases} \]
The attribute context of $A_i$ is $(A_i, f_i, \Lt_{A_i})$.

\item for $A_{n + 1}$, we put $\DAV{A_{n + 1}} = \{0_{n+1}, 1_{n+1}\}$, 
$\Lt_{A_i} = (\DAV{A_{n + 1}}), \leq_{n + 1})$ where $\leq_{n + 1}$ is defined 
by $0 \leq_{n + 1} 1$.
The comparability function $f_{n + 1}$ associated to $A_{n + 1}$ reads as follows:
\[f_{n + 1}(x,y) = \begin{cases}
	1_{n+1} & \text{ if } \card{x - y} \neq 1 \\ 
	0_{n+1} & \text{ if } \card{x - y} = 1 \\
\end{cases} \]
The attribute context of $A_{n + 1}$ is $(A_{n + 1}, f_{n + 1}, \Lt_{A_{n + 1}})$.
\end{itemize}
Let $\C_{\U} = \{(A_i, f_i, \cc{L}_{A_i}) \mid 1 \leq i \leq n + 1\}$ be the 
resulting scheme context.
Its comparability function is called $f_{\U}$.
Now, we construct a relation $r$ over $\cc{C}_{\U}$.
To every clause $C_j$ in $\varphi$, $1 \leq j \leq m$ we associate a subrelation 
$r_j$ with two tuples $t$ and $t'$:
\begin{itemize}
\item for each $A_i, 1 \leq j \leq n$:
\begin{itemize}
	\item if $x_i \in C_j$, $t[A_i] = 3j-2$ and $t'[A_i] = 3j$, 
	\item if $\bar{x_i} \in C_j$, $t[A_i] = 3j-2$ and $t'[A_i] = 3j-1$,
	\item $t[A_i] = t'[A_i]=3j$ otherwise.
\end{itemize}
\item $t[A_{n+1}]=3j-1$ and $t'[A_{n+1}]=3j$.
\end{itemize}
Let $r = \bigcup_{1 \leq j \leq m} r_j$.
We consider the functional dependency $X \imp A$ where $X=\{A_1,...,A_n\}$ and $A 
= A_{n+1}$.
Clearly, the size of the reduction is polynomial in the size $\varphi$. 
The strong realities of the abstract context coincide with the Cartesian product $\{a_i, 
\bar{a_i}\}^n$.

Let $g$ be a strong reality and for every $1 \leq i \leq n + 1$, let $h_i$ be the 
projection of $g$ on the attribute $A_i$, \textit{i.e.} $h_i = g_{\mid A_i}$.
First, we show that $r \models_g X \imp A$ if and only if $r_j \models_g X \imp A$ for 
each subrelation $r_j$, $1 \leq j \leq m$.
The only if part is clear, since $r \models_g X \imp A$ and $r_j \subseteq 
r$ entails $r_j \models_g X \imp A$.
For the if part, it is sufficient to show that any pair of tuples lying in different 
subrelations always agree on $A$ with respect to $g$. 
Let $t, t'$ be two tuples of $r$ such that $t$, $t'$ are in distinct subrelations 
$r_j$, $r_k$ where $1 \leq j < k \leq m$ (resp.~).
By construction of $r$, the minimum value of $\card{t[A_{n + 1}] - t'[A_{n + 
1}]}$ is reached when $t[A_{n + 1}] = 3j$ and $t'[A_{n + 1}] = 3k - 1$.
As $k \geq j + 1$, we obtain $3k - 1 \geq 3j + 2$ and hence $\card{t[A_{n+1}] - 
t'[A_{n+1}]} \geq 2$.
Therefore, we have $f_{n + 1}(t[A_{n+1}],t'[A_{n+1}])= 1_{n + 1}$ by definition of 
$f_{n + 1}$ and $h_{n + 1}(f_{n+1}(t[A_{n+1}],t'[A_{n+1}])) = 1$ for each (strong) 
reality $g$.
Thus, for every reality, whenever two tuples $t$ and $t'$ disagree on the 
right-hand side of $X \imp A$, they must belong to the same subrelation.

Now we prove that $\varphi$ is satisfiable if and only if $X \imp A$ is strongly possible.
We begin with the only if part.
Suppose that $\varphi$ is satisfiable and let $\mu: V \to \{0,1\}$ be a valid 
truth assignment of $\varphi$. 
We construct a strong reality $g$ such that $r \models_g X \imp A$. 
For every attribute context $(A_i, f_i, \Lt_{A_i})$, $1 \leq j \leq n$, we 
define the following interpretation $h_i$:
\begin{itemize}
	\item $h_i(0_i) = 0$ and $h_i(1_i) = 1$,
	\item $h_i(a_i)=\mu(x_i)$ and $h_i(\bar{a_i}) = 1 - \mu(x_i)$.
\end{itemize}
For $A_{n+1}$, we define $h_{n+1}(0_{n + 1}) = 0$ and $h_{n+1}(1_{n + 1}) = 1$.
Let $g: \Lt_{\U} \to \{0,1\}^n$ be the scheme interpretation $g: \Lt_{\U} \to \{0,1\}^n$ 
defined by $g(\langle x_1, \ldots, x_{n+1}\rangle) = \langle h_1(x_1), \ldots, 
\allowbreak h_{n+1}(x_{n+1})\rangle$. 
For every $1 \leq j \leq n$, we have $h_i(a_i) \neq h_i(\bar{a_i})$ which 
implies that $h_i$ is an homomorphism.
Therefore, $g$ is a strong reality by Proposition \ref{prop:morphisms} and 
Definition \ref{def:realities}.
We show that $r \models_g X \imp A$.
Using previous discussion, it is sufficient to prove that $r_j \models_g X \imp A$ 
for all $1 \leq j \leq m$ to obtain $r \models_g X \imp A$.
Hence, let $t, t'$ be the two tuples of $r_j$ for some $C_j = (\ell_1 \lor \ell_2 \lor 
\ell_3)$ in $\varphi$.
At least one literal in $C_j$, say $\ell_1$, validates $C_j$ with respect to $\mu$. 
Since $\ell_1 \in \{x_i,\bar{x_i}\}$ for some $A_i \in \U$, we have two cases:
\begin{enumerate}
	\item $\ell_1 = x_i$.
	Then $t[A_i]=3j-2$, $t'[A_i]=3j$ and $f_i(t[A_i], t'[A_i])=\bar{a_i}$. 
	Moreover, $h_i(\bar{a_i}) = 1 - \mu(x_i) = 0$ since $\mu(x_i) = 1$.
	Consequently, $h_i(f_i(t[A_i], t'[A_i])) = 0$ and $r_j \models_g X 
	\imp A$ holds as $A_i \in X$.
	
	\item $\ell_1 = \bar{x_i}$.
	Then $t[A_i]=3j-2$, $t'[A_i]=3j-1$ and $f_i(t[A_i], t'[A_i]) = a_i$. 
	Since $h_i(a_i)=\mu(x_i) = 0$, we deduce $h_i(f_{A_i}(t[A_i], t'[A_i])) = 0$ 
	and $r_j \models_g X \imp A$ follows. 
\end{enumerate}
In other words, the tuples of every subrelation $r_j$ cannot agree on $X$. 
We conclude that $X \imp A$ is always satisfied in the strong reality $g$ and 
that $X \imp A$ is strongly possible.

We move to the if part.
Assume that $X \imp A$ is strongly possible and let $g$ be a strong reality satisfying $r 
\models_g X \imp A$.
Since $g$ is a strong reality and by Proposition \ref{prop:morphisms}, we 
have $h_i(a_i) \neq h_i(\bar{a_i})$.
As a consequence, we can define a truth assignment $\mu \colon V \to \{0, 1\}$ as follows:
\[ 
\mu(x_i) = \begin{cases}
1 & \text{ if } f_i(t[A_i], t[A_i']) = \bar{a_i} \text{ for some } t, t' \in r \text{ 
and } h_i(\bar{a_i}) = 0 \\
0 & \text{ if } f_i(t[A_i], t[A_i']) = a_i \text{ for some } t, t' \in r \text{ 
and } h_i(a_i) = 0 \\
0 & \text{ otherwise. } 
\end{cases}
\]
Note that $f_i(t[A_i], t[A_i']) \in \{a_i, \bar{a_i}\}$ if and only if $t, t'$ are 
the two tuples of the same subrelation $r_j$, for some $1 \leq j \leq m$.
We show that $\mu$ is a satisfying assignment of $\varphi$.
Let $C_j$ be a clause of $\varphi$ and let $t, t'$ be the two tuples of $r_j$.
By assumption, $r_j \models_g X \imp A$.
Moreover, $h_{n + 1}(f_{n + 1}(t[A_{n + 1}], t'[A_{n + 1}])) = 0$ by construction of 
$f_{n + 1}$.
Therefore, there must exist $A_i \in X$ such that $h_i(f_i(t[A_i], t'[A_i])) = 0$.
We have two cases:
\begin{itemize}
\item $f_i(t[A_i], t'[A_i]) = a_i$ in which case $\bar{x_i}$ belongs to $C_j$ by 
construction of $f_{\U}$.
As $h_i(a_i) = 0$, we have $\mu(x_i) = 0$ by definition.
Hence $C_j$ is satisfied by $\mu$.

\item $f_i(t[A_i], t'[A_i]) = \bar{a_i}$ so that $x_i$ belongs to $C_j$.
Since $h_i(\bar{a_i}) = 0$, $\mu(x_i) = 1$ and $C_j$ is satisfied by $\mu$.
\end{itemize}
In any case, $\mu$ is a satisfying truth assignment for $C_j$.
Applying the same reasoning to every clause of $\varphi$, we deduce that $\mu$ is a 
satisfying truth assignment for $\varphi$, concluding the proof.
\end{proof}


We summarize our complexity results in Table \ref{tab:complexity}.

\begin{table}[ht!]
\centering 
\begin{tabular}{c | c | c}
	& Realities & Strong Realities \\ \hline
	Certain FD & $\csf{P}$ (Theorem \ref{thm:CFD-P}) & $\csf{P}$ 
	(Theorem \ref{thm:SCFD-P}) \\
	Possible FD & $\csf{P}$ (Theorem \ref{thm:PFD-P}) & $\csf{NP}$-complete 
	(Theorem \ref{thm:SPFD-NP})
\end{tabular}
\caption{Complexity results for Possible and Certain FD problems with respect 
	to realities and strong realities}
\label{tab:complexity}
\end{table}

To conclude this section, we briefly discuss the complexity of the possibility and 
certainty problems for sets of FDs.
Let $\C_{\U}$ be a scheme context, $r$ be a relation over $\C_{\U}$ and $\F$ be a set of 
FDs also over $\C_{\U}$.
For a given reality $g$, we write $r \models_g \F$ if $r \models_g X \imp A$ for each $X 
\imp A$ in $\F$.
If there exists a (strong) reality $g$ such that $r \models_g \F$, then $\F$ is 
(strongly) possible in $r$.
If $r \models_g \F$ for any (strong) reality $g$, then $\F$ is (strongly) certain in $r$.

For certainty, it is clear that $\F$ is (strongly) certain if and only if each $X \imp 
A$ 
is (strongly) certain itself.
Therefore, using Theorem \ref{thm:CFD-P} and Theorem \ref{thm:SCFD-P}, we directly 
deduce 
that it takes polynomial time to check that $\F$ is (strongly) certain.

We move to possibility.
Deciding whether a single FD $X \imp A$ is strongly possible is already \NP-complete due 
to Theorem \ref{thm:SPFD-NP}.
It readily follows that checking whether $\F$ is strongly possible is also \NP-complete.
The case of simple possibility however seems harder to settle.
To date, it remains an intriguing open question.
we give some hints on its hardness.
Formally, the problem reads as follows:

\Problem{Possible Set of Functional Dependencies (PSFD)}
{A scheme context $\C_{\U}$, a relation $r$ and a set $\F$ of functional dependencies, 
both over $\C_{\U}$.}
{\ctt{yes} if there exists a reality $g$ such that $r \models_g \F$, \ctt{no} otherwise.}

In order to study the complexity of \csmc{PSFD}, one could try to characterize the case 
where a set of FDs is not possible, much as in Lemma \ref{lem:possible-FD} for a single 
FD.
For a single FD $X \imp A$, the fact that $X \imp A$ must be false in each reality allows 
us to identify an abstract tuple which characterizes the possibility of the FD $X \imp A$.
Unfortunately, this approach does not seem to apply to a set $\F$ of FDs, since there is 
in general no FD $X \imp A$ of $\F$ which is false in each reality (see Example 
\ref{ex:possible-sets}). 
This makes a characterization similar to Lemma \ref{lem:possible-FD} harder to obtain.


Another strategy is based on the observation that $\F$ cannot be possible if there is 
some $X \imp A$ in $\F$ which is not possible.
However, unlike for certain FDs, the other direction is not true.
More precisely, it may happen that all FDs in $\F$ are possible as singletons, but that 
$\F$ as a whole is not possible.
Again, Example \ref{ex:possible-sets} illustrates this situation.
As a consequence, using a decomposition of $\F$ and solve sub-problems to obtain an 
answer to \csmc{PSFD} on $\F$ seems unpromising.
This observation also suggests that in general, if one obtains a valid reality for 
each FD of $\F$ (or for each part of some partition of $\F$), there is no guarantee that 
these realities can be combined to obtain a valid reality for $\F$.
This is illustrated in Example \ref{ex:possible-sets}.

\begin{example} \label{ex:possible-sets}
Let $\C_{\U} = \{(A, f_A, \Lt_A), (B, f_B, \Lt_B)\}$ be a scheme context where $\dom(A) 
= 
\dom(B) = \cb{N} \cup \{\ctt{null}\}$, $\Lt_A$, $\Lt_B$ are the abstract lattices given 
in 
Figure \ref{fig:possible-1}, and $f_A$, $f_B$ are comparability functions defined as 
follows:

\[ 
f_A(u, v) = f_B(u, v) = \begin{cases}
1 & \text{if } u = v \neq \ctt{null} \\
a & \text{if } u \neq \ctt{null}, v \neq \ctt{null} \text{ and } u \neq v \\
b & \text{if } u = v = \ctt{null} \\
0 & \text{otherwise.}
\end{cases}
\]
\begin{figure}[ht!]
\centering 
\includegraphics[scale=1.0, page=1]{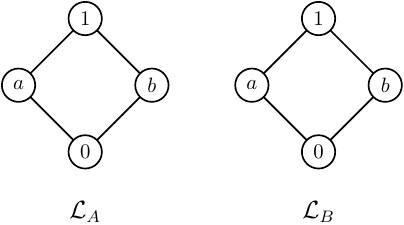}%
\caption{The abstract lattices of $\C_{\U}$}
\label{fig:possible-1}
\end{figure}
Let $r$ be the relation on $\C_{\U}$ given to the left of Figure \ref{fig:possible-2}.
Its abstract lattice $\Lt_r$ is represented to the right.
\begin{figure}[ht!]
\centering 
\includegraphics[scale=1.0, page=2]{Figures/PSFD-hard.pdf}%
\caption{A relation $r$ and its associated abstract lattice $\Lt_r$}
\label{fig:possible-2}
\end{figure}
We consider the set of FDs $\F = \{A \imp B, B \imp A\}$.
For convenience we give in Table \ref{tab:possible-3} the list of all possible realities 
along with the abstract tuples which will be interpreted as counter-examples to $A \imp 
B$ or $B \imp A$.
Explicitly, a set in $g(\Lt_r)$ is a counter-example to a FD $X \imp A$ if it contains 
$X$ but not $A$.

First, remark that both $A \imp B$ and $B \imp A$ are possible.
Indeed, if we set $g = \langle b, a \rangle$ or $g = \langle a, 1 \rangle$, then $r 
\models_g A \imp B$ as there are no counter-examples in the resulting closure system.
Similarly $r \models_g B \imp A$ if $g = \langle a, b \rangle$ or $\langle a, 1 \rangle$.
However, there is no reality in which both FDs hold true together.
Therefore, $\F$ is not possible in $r$.
We can also make the following observations:
\begin{itemize}
\item among all the counter-examples of $A \imp B$, none of them appears as a 
counter-example in all realities where $A \imp B$ does not hold. 
As a consequence, it is not sufficient to exhibit a unique prototypical 
counter-example to each FD (as in Lemma \ref{lem:possible-FD}) to characterize the 
possibility of $\F$.
\item the realities $\langle a, 1 \rangle$ and $\langle 1, a \rangle$ make $B \imp A$ 
and $A \imp B$ valid (respectively).
Moreover both $\langle a, 1 \rangle \mt \langle 1, a \rangle$ and $\langle a, 1 \rangle 
\jn \langle 1, a \rangle$ are realities.
But since $\F$ is not possible, neither of them are valid realities for $\F$.
This suggests that the rather natural process of combining realities with lattice 
operations is not sufficient to determine the possibility of $\F$.
\end{itemize}
\begin{table}[ht!]
\centering 
\includegraphics[scale=1.0, page=3]{Figures/PSFD-hard.pdf}%
\caption{Table of counter-examples to $\F$ according to the different realities}
\label{tab:possible-3}
\end{table}
\end{example}

\section{Related work}
\label{sec:related}

Incomplete information has been extensively studied in the database and artificial 
intelligence communities, see for example 
\cite{abedjan2018data, abiteboul1995foundations, bertossi2011database,  console2019we, 
greco2014certain, imielinski1989incomplete, lenzerini2002data, link2019relational, 
	suciu2011probabilistic}.
Numerous research on data dependencies has been conducted over the last years,  
leading to a plethora of propositions from seminal FDs to more elaborated forms of 
dependencies, among which we quote 
\cite{baixeries2018characterizing, bertossi2013data, caruccio2015relaxed, 
	demetrovics1992functional, goguen1967fuzzy, link2019relational, ng2001extension}.

Many papers have studied the lattice representation of functional dependencies, 
such as \cite{day1992lattice, demetrovics1992functional}, which has been
extended to multivalued dependencies by Balcazar and Baixeries in \cite{baixeries2005new}.
However, they do not consider incomplete information as we do in this paper.
W. Ng \cite{ng2001extension} defined ordered domains in the relational data model, 
i.e., a partial order over every attribute domain is permitted.
The paper studies the consequences on both FDs and relational languages (algebra and 
SQL). 
His work does not consider incomplete information as we do: our partial order is not 
defined on the attribute domain, but on the abstract domain of attributes, and is 
required to form a lattice. 
It offers a new point of view on FDs in presence of incomplete information.

In \cite{baixeries2018characterizing}, order dependencies are based on a transitive 
relation, and approximate dependencies on a symmetric relation, leading to 
approximate-matching dependencies.
Unlike our framework, the comparison of two values in their model is a one-step process 
based on Boolean similarity functions (reflexive and symmetric).
The set with similarities of Bauer and Hajdinjak \cite{hajdinjak2009similarity} 
are close to attribute contexts, but the authors do not consider interpretations, neither 
do they study functional dependencies in this setup.
In \cite{bertossi2013data}, the authors study matching dependencies using Boolean  
similarity functions and matching functions (idempotent, commutative, and associative). 
More recently, the authors of \cite{schirmer2020efficient} study matching 
dependencies where the similarity functions have values in the interval $[0, 1]$.
The aim of matching functions is to chase the relation instance to obtain a clean 
relation. 
However, our way of dealing with incomplete information is completely different.
In \cite{libkin2016sql}, the author studies the semantics of SQL query answering in 
presence of incomplete information. 
He defines a multi-valued logic similar to our contribution, without considering data 
dependencies.
Based on SQL three-valued model, a semantic for possible/certain (or weak/strong) 
FDs have been studied in several works \cite{al2022strongly, kohler2016possible, 
kohler2018sql, levene1998axiomatisation, lien1982equivalence}.
These works either rely on a completion of the data, or implicit interpretations (in our 
sense) of the similarity of two values at least one of which is \ctt{null} (see also the 
very recent work \cite{libkin2020handling}).
In our contribution, we do not modify the input data, and the interpretation of the 
comparison with a \ctt{null} value as true or false ($0$ or $1$) in the SQL model appears 
as a particular case of our construction (see Section \ref{sec:framework}).

To evaluate the truth of a proposition like equality of two entities, it is usual 
to order the truth values in a lattice \cite{arieli1996reasoning}. 
In addition to the well-known true/false lattice, several other semantics that 
might be useful have been studied.
This is the case, for example, of Kleene or \L{ukasiewicz} 3-valued logic 
\cite{bolc2013many}, and Belnap's 4-valued logic, where truth values can be ordered by 
both their degree of truth and their degree of knowledge, or informativeness 
\cite{belnap1977useful}. 
Belnap's logic has been generalized to bi-lattices that are more suitable for 
non-monotonic reasoning \cite{arieli1996reasoning, ginsberg1988multivalued}.
In \cite{console2016approximations}, the authors define many-valued semantics and 
informativeness ordering for query answers in incomplete databases with missing
values represented by marked \ctt{null}. 
Recently, Grahne \cite{grahne2018useful} used the 4-valued logic to capture 
inconsistencies in databases and for query answering in the presence of inconsistencies.

Several works extend the Codd's relational model using fuzzy logic.
Among these, we quote \cite{buckles1982fuzzy, prade1984generalizing, raju1988fuzzy, 
chen1992fuzzy, bhuniya1993lossless, cubero1994new, tyagi2005complete, kiss1990x, 
yahia1999extension, bosc1999database, belohlavek2011codd, belohlavek2018relational}.
We redirect the reader to the survey of Jevskova et al. \cite{jevzkova2017fuzzy} 
for an in-depth comparison of all these approaches.
They replace the classical Boolean logic at the core of the logical foundations of the 
relational model by a fuzzy setup.
More precisely, they replace the true/false Boolean lattice by a more complex set of 
truth values, still ordered in a lattice.
Usually, this lattice is a (continuous) chain lattice on the interval 
$[0, 1]$, except for the framework of Belohlavek and Vychodil which consider every 
lattice that can be residuated, being more general \cite{belohlavek2011codd, 
belohlavek2018relational}.
Then, they adapt the logical operations of conjunction and implication to this new 
set of truth values, thus obtaining a fuzzy expression of the modus ponens (see 
\cite{belohlavek2012fuzzy} for detailed explanations).
Classic examples of such operations are G\"odel, Goguen or \L{ukasiewicz} operations.
The lattice of truth values combined with these new operations form a residuated lattice.
All these works then study the extension of the classical functional dependencies within 
their fuzzy setup, the so-called fuzzy or generalized functional dependencies.

Our approach in this paper is different though: it comes on top of the relational model 
and we do not alter its logical foundations.
Thus, it remains crisp. 
Indeed, the abstract lattices we define for each attribute are not the support for new 
logical operations, neither do they replace the underlying Boolean truth values.
Abstract values can be seen as similarity values rather than truth values.

As a consequence, we can identify some typical examples where our framework will differ 
from the fuzzy setup:
\begin{itemize}
\item here, the result of a logical operation remains true or false. 
For instance, if we have two tuples $t_1, t_2$ at hand, the result of $f_{A}(t_1[A], 
t_2[A]) = x \text{ AND } f_B(t_1[B], t_2[B]) = y$ is either \ctt{true} or \ctt{false} 
(which has in general nothing to do with values in $\Lt_A$, $\Lt_B$).
In the fuzzy set up, the result of such operation would be a value $x$ in the underlying 
residuated lattice also containing $y$ and $z$.

\item abstract FDs are different in nature from fuzzy FDs and logical expressions $x \imp 
y$ in the fuzzy set up.
An abstract FD is a lattice expression \cite{day1992lattice} over the product of the 
abstract lattices used to replace equality.
It is either true or false, depending on the closure operator induced by the data at 
hand. 
Thus, an abstract FD differs from a logical expression $x \imp y$ (possibly with hedge) 
in a residuated lattice whose truth value is another value $z$ in the residuated lattice, 
independent from the data \cite{belohlavek2012fuzzy}.
It also differs from similarity based fuzzy functional dependencies (SBFDs) since these 
latter are expressed in terms of (crisp in our case) attribute sets 
\cite{belohlavek2018relational}. 

\item Classic functional dependencies.
In our work, the validity of a functional dependency in a relation depends on the choice 
of an interpretation. 
Interpretations are a generalization of $\alpha$-cuts of fuzzy logic (in the interval 
$[0, 1]$) to any lattice.
In the fuzzy relational model, FDs are restriction of SBFDs to nonranked data tables 
equipped with equality. 
\end{itemize}

All the fuzzy models we mentioned do not consider comparability as a two-step process as 
we do in this paper.
As a consequence, they do not consider the question of different interpretations, in 
particular the problem of characterizing (strong) realities.
Similarly, they do not investigate (strongly) possible/certain functional dependencies.

At last, the fact that we rely on the classical logic leads to different 
implementation perspectives.
Our framework can be implemented as a plugin on top of a RDBMS by extending DDL and SQL 
syntax with ways to declare comparabilities and lattices. 
On the other hand, implementing the fuzzy relational model amounts instead to propose a 
new DBMS, possibly based on existing RDBMS \cite{belohlavek2007relational}.

One can note that the fuzzy set-up has also been applied to Formal Concept Analysis (FCA) 
\cite{belohlavek2012fuzzy, belohlavek2005implications, burusco1994study, 
burusco2000concept, medina2009formal, medina2010multi}.
The usual incidence relation is replaced by a (possibly residuated) lattice of truth 
values equipped with specific logical connectives.
These connectives are used to derive fuzzy concepts and fuzzy attribute implications, in 
the spirit of fuzzy functional dependencies for fuzzy relational systems 
\cite{belohlavek2018relational}.
However, since the principles are similar to fuzzy relational systems for databases, and 
our framework is database-oriented, we do not develop fuzzy FCA further.

Many contributions have been proposed to deal with uncertain FDs, among which we 
quote \cite{bertossi2013data, link2019relational, song2011differential}, and 
\cite{caruccio2015relaxed} for a survey.
The authors of \cite{koudas2006relaxing} have considered minimal relaxations for 
query conditions involving comparison operators (as selection or join queries) to obtain 
an answer.
Their works concern numerical attributes where relaxation is quantified as value 
differences, but this can also be applied to any \textit{``blackbox''} function that 
quantifies the relaxation.
Our contribution is an extension of these works within a declarative perspective, 
using lattice theory.


\section{Conclusion}
\label{sec:conclusion}

Defining equality between two tuples is a central problem in databases, notably due 
to data quality problems such as missing values and uncertainty.
In practice, domain experts should be able to easily give a meaningful semantics for 
equality.
In this paper, we have introduced a lattice-based formalism allowing to deal with 
many possible interpretations of data value equalities.
Our approach is able to handle both missing and uncertain values, which might be 
of practical use in applications where values are known to be imprecise and dirty. 
In order to determine whether two tuples are equal, they are first compared using a 
comparability function which returns an abstract tuple representing their similarity. 
Then, an interpretation maps this abstract tuple to a binary vector where $1$ means 
equality and $0$ difference.
We introduced realities as particular interpretations satisfying two reasonable 
consistency rules: they are increasing and guarantee that meet of two 
abstract values considered equal (interpreted as $1$) is also interpreted as $1$.
Strong realities further satisfy the rule that the upper bound of two abstract values 
considered different (interpreted as $0$) is also interpreted as $0$.

We studied this framework with respect to functional dependencies.
In this setup, we associate an abstract lattice with any relation.
This abstract lattice naturally induces abstract functional dependencies.
We showed that realities turn the abstract lattice of a relation into a closure system 
over the relation scheme.
Furthermore, we exhibited a relationship between abstract FDs and FDs based on 
realities.
On the other hand, we studied the problem of deciding whether a functional 
dependency over an incomplete database is certain (it holds in all possible worlds 
represented by the realities) or possible (there exists a possible world in which it 
holds).

Future work include both theoretical and practical contributions.
On the practical point of view, our framework could be instantiated on top of DBMS in 
order to provide a declarative view of equality.
On this purpose, it would be necessary to slightly extend SQL and DDL syntax to allow the 
user to declare comparability functions, abstract lattices and realities only on the 
relevant attributes.
In this perspective, a query relying on realities would be rewritten in order to use 
comparability functions on attributes where they are defined (otherwise, equality is 
kept).
In case a query does not use any keyword, it would be interpreted as a classical query. 
There is an ongoing work on SQL queries based on these principles within the context of a collaboration with the (\href{https://www.cemafroid.fr/index-en.htm}{CEMAFROID}). 

On the theoretical side we plan to investigate in greater depths abstract dependencies 
and their interactions with realities.
For example, if we are given a cover of an abstract lattice by abstract FDs, is it 
true that for any reality, the interpretation of this cover will contain a cover of 
the lattice interpretation?
Fascinating questions are also pending regarding possible functional dependencies.
For instance, can we decide in polynomial time that there exists a reality in which a 
given set of FDs holds?
There are also pending questions regarding functional dependencies.
In particular, the complexity of deciding whether a set of functional dependencies is 
possible remains an intriguing open question.

\subsection*{Acknowledgments}

The first author is funded by the French government IDEXISITE
initiative 16-IDEX-0001 (CAP 20-25). 
The third author is supported by the CNRS, ProFan Project.
Also, We thank the Datavalor initiative at INSA
Lyon for funding a part of this work.

\bibliographystyle{alpha}
\bibliography{biblio}

\appendix


\section{Proofs of Section \ref{sec:framework}}

\begin{proposition*}[\ref{prop:morphisms}]
Let $g$ be a scheme interpretation.
The following properties hold:
\begin{enumerate}
	\item $g$ increasing if and only if $h_{A_i}$ is increasing for any $1 \leq i 
	\leq n$,
	\item $g$ a $\mt$-homomorphism (resp.~$\jn$-homomorphism) if and only if for any $1 
	\leq i \leq n$, $h_{A_i}$ is a $\mt$-homomorphism (resp.~$\jn$-homomorphism) .
\end{enumerate}
\end{proposition*}

\begin{proof}
We prove the items for increasing and $\mt$-homomorphism.
Other homomorphisms and decreasing properties are obtained in a dual way.

\paragraph{Item 1} 
First, assume that for any $1 \leq i \leq n$, $h_{A_i}$ is increasing.
Let $x = \langle x_1, \allowbreak \dots, x_n \rangle$, $y = \langle y_1, 
\dots, y_n 
\rangle \in \Lt_{R}$, we have
\[
\begin{array}{r l l}
x \leq y \eqv & x_i \leq y_i \quad \text{ for any } 1 \leq i \leq n & \text{(def of 
$\Lt_{\U}$)} \\
\implies & h_{A_i}(x_i) \leq h_{A_i}(y_i) \quad \text{ for any } 1 \leq i \leq n & 
\text{($h_{A_i}$ increasing)} \\
\implies & g(x) \leq g(y) & \\
\end{array}
\]
For the only if part, assume $g$ is increasing and consider $x_i \leq y_i$ in 
$\Lt_{A_i}$. 
Build $x$ and $y$ in $\Lt_{\U}$ as follows, $x = \langle 1_1, \dots, x_i, 
\dots, 
1_n \rangle$ and $y = \langle 1_1, \dots, y_i, \dots, 1_n \rangle$.
Since $g$ is increasing, we have $g(x) \leq g(y)$ and for any $j \neq i$, 
$h_{A_j}(x_j) = h_{A_j}(y_j) = h_{A_j}(1_j) = 1$.
We must also have $h_{A_i}(x_i) \leq h_{A_i}(y_i)$.
Because for any pair $x_i \leq y_i$ in $\Lt_{A_i}$ and for any $1 \leq i \leq n$ we 
can build 
such $x, y$ showing $h_{A_i}(x_i) \leq h_{A_i}(y_i)$, we have that $h_{A_i}$ 
is 
increasing for any $1 \leq i \leq n$.
		
\paragraph{Item 2} 
Let us assume that for each $1 \leq i \leq n$, $h_{A_i}$ is a $\mt$-homomorphism.
For $x, y \in \Lt_{\U}$, one has:
\begin{align*}
g(x \mt y) = & \, \langle h_{A_1}(x_1 \mt y_1), \dots, h_{A_n}(x_n \mt y_n) 
\rangle \quad \text{(def of $g$)} \\
= & \, \langle h_{A_1}(x_1) \mt h_{A_1}(y_1), \dots, h_{A_n}(x_n) \mt 
h_{A_n}(y_n) \rangle \quad \text{($\mt$-homomorphism)} \\
= & \, \langle h_{A_1}(x_1), \dots, h_{A_n}(x_n) \rangle \mt \langle 
h_{A_1}(y_1), \dots, h_{A_n}(y_n) \rangle \quad \text{(in $\{0, 1\}^n$)} \\
= & \, g(x) \mt g(y) \\
\end{align*}
As for the only if part, we use contrapositive. 
Assume without loss of generality that $h_{A_1}$ is not $\mt$-homomorphic, and let $x_1, 
y_1 \in 
\Lt_{A_1}$ 
such that $h_{A_1}(x_1 \mt y_1) \neq h_{A_1}(x_1) \mt h_{A_1}(y_1)$. 
Let $x = \langle x_1, \dots, x_n \rangle$ and $y = \langle y_1, \dots, y_n 
\rangle$. 
Then, $g(x) \mt g(y) \neq g(x \mt y)$ by definition of $g$, as for $A_1$ we 
have 
$h_{A_1}(x_1 \mt y_1) \neq h_{A_1}(x_1) \mt h_{A_1}(y_1)$.
\end{proof}


\section{Proofs of Section \ref{sec:fd}}


\begin{proposition*}[\ref{prop:meet-sublattice}]
Let $r$ be a relation over a scheme context $\C_{\U}$.
Then, $\Lt_r \in \Sb(\Lt_{\U})$.
\end{proposition*}

\begin{proof}
First, we show that $\Lt_r$ is indeed a lattice.
It is sufficient to prove that it has a well-defined $\mt$ operation and a top element 
\cite{davey2002introduction}.
Also following \cite{davey2002introduction}, we have that $\bigwedge \emptyset = \langle 
1_{A_1}, \dots, 1_{A_n} \rangle$, the top element of $\Lt_{\U}$.
Thus, $\Lt_r$ has a top element.
Now we prove that $\Lt_r$ has a $\mt$ operation.
Since $\Lt_r \subseteq \Lt_{\U}$, it is sufficient to prove that for every $t_1, t_2 \in 
\Lt_r$, $t_1 \mt t_2 \in \Lt_r$, where $\mt$ is the meet operation of $\Lt_{\U}$.
By definition of $\Lt_r$, $t_1 = \bigwedge T_1$ and $t_2 = \bigwedge T_2$ for some $T_1, 
T_2 \subseteq f_{\U}(r)$.
Since, $t_1 \mt t_2 = \bigwedge T_1 \mt \bigwedge T_2 = \bigwedge T_1 \cup T_2$ and $T_1 
\cup T_2 \subseteq f_{\U}(r)$, $t_1 \mt t_2 \in \Lt_r$ holds, by construction of $\Lt_r$.
Consequently, $\Lt_r$ is indeed a lattice.
As $\Lt_r \subseteq \Lt_{\U}$ and $\Lt_r$ is closed for the $\mt$ operation, it is 
furthermore a $\mt$-sublattice of $\Lt_{\U}$.
\end{proof}

\begin{proposition*}[\ref{prop:sat-fd-reality-1}]
Let $r$ be a relation over $\C_{\U}$, $g$ a reality, and $x \imp y$ an abstract 
FD.
We have $r \models_g g(x) \imp g(y)$ if and only if for any $z \in \Lt_r$, $g(x) 
\subseteq g(z)$ implies $g(y) \subseteq g(z)$, denoted $\Lt_r \models_g g(x) \imp g(y)$.
\end{proposition*}

\begin{proof}
The if part is clear.
We prove the only if part.
Let $z \in \Lt_r$. 
If $z \in f_{\U}(r)$ then $g(x) \subseteq g(z)$ implies $g(y) \subseteq g(z)$ by 
assumption.
Hence consider an element $z \in \Lt_r \setminus f_{\U}(r)$ such that $g(x) 
\subseteq 
g(z)$.
By definition of $\Lt_r$, $z = \Mt F_z$ where $F_z = \{m \in f_{\U}(r) \mid z 
\leq m 
\}$.
As $g$ is increasing, it must be that for all $m$ in $F_z$, $g(x) \subseteq g(m)$.
Since $r \models_g g(x) \imp g(y)$, we also have that $g(y) \subseteq g(m)$ for 
any 
$m 
\in F_z$, and in particular $g(y) \subseteq \bigcap_{m \in F_z} g(m)$.
Moreover, $g$ is a $\mt$-homomorphism so that $z = \Mt F_z$ implies $g(z) = g(\Mt 
F_z) = 
\bigcap_{m \in F_z} g(m)$.
Therefore $g(y) \subseteq g(z)$ and $\Lt_r \models_g g(x) \imp g(y)$.
\end{proof}

\begin{lemma*}[\ref{lem:projection-g}]
Let $r$ be a relation over a scheme context $\C_{\U}$.
Let $x \in \Lt_{\U}$, and $g$ be a reality. 
Then $g(\cl(\pi_g(x))) = \cl_g(g(\pi_g(x))) = \cl_g(g(x))$.
\end{lemma*}

\begin{proof}
The equality $\cl_g(g(\pi_g(x))) = \cl_g(g(x))$ directly follows by construction of 
$\pi_g(x)$ as $g(\pi_g(x)) = g(x)$.
As $\pi_g(x) \leq \cl(\pi_g(x))$ in $\Lt_{\U}$ and $g$ is increasing, we have that 
$g(x) = g(\pi_g(x)) \subseteq \cl_g(g(\pi_g(x)))$.
To show that $g(\cl(\pi_g(x))) = \cl_g(g(x))$, we prove that $\cl(\pi_g(x)) = 
\csf{min}_{\leq}\{y \in \Lt_r \mid g(x) \subseteq g(y)\}$, which implies that 
$g(\cl(\pi_g(x))) = \csf{min}_{\subseteq}\{F \in g(\Lt_r) \mid g(x) \subseteq F\} = 
\cl_g(g(x))$.
If $\cl(\pi_g(x))$ is the bottom element of $\Lt_r$, $g(\cl(\pi_g(x)))$ must be the 
bottom element of $g(\Lt_r)$ and $g(\cl(\pi_g(x))) = \cl_g(g(x))$ is clear.
If this is not the case, let $y \ngeq \cl(\pi_g(x))$, $y \in \Lt_r$.
By closure properties, $y \ngeq \cl(\pi_g(x))$ is equivalent to $y \ngeq \pi_g(x)$.
To satisfy $y \ngeq \pi_g(x)$, there must be $A \in \U$ such that $\pi_g(x)[A] 
\nleq_A y[A]$.
We have two cases: $A \in g(x)$ or $A \notin g(x)$.
If $A \notin g(x)$, we have that $A \notin g(\pi_g(x))$ as $g(x) = g(\pi_g(x))$.
Consequently, $\pi_g(x)[A] = 0_A$ by construction of $\pi_g(x)$ and hence 
$\pi_g(x)[A] \leq_A y[A]$.
Therefore, it must be that $A \in g(x)$.
As $\pi_g(x)[A] \nleq_A y[A]$ by assumption, it follows that $A \notin g(y)$ and  
therefore that $g(x) \nsubseteq g(y)$.
Consequently, for any $y \ngeq \pi_g(x)$, $y \in \Lt_r$, we have $g(x) \nsubseteq 
g(y)$ so that $g(\cl(\pi_g(x))) = \csf{min}_{\subseteq}\{F \in g(\Lt_r) \mid g(x) 
\subseteq F\} = \cl_g((x))$, concluding the proof.
\end{proof}

\begin{lemma*}[\ref{lem:certain-FD}]
Let $r$ be a relation over $\C_{\U}$, and $X \imp A$ a functional dependency.
Then:
\begin{enumerate}
	\item $X \imp A$ is certain if and only if for any $x \in 
	\Lt_r$, either $x[A] = 1_A$ 
	or there exists $B \in X$ such that $x[B] = 0_B$.
	\item $X \imp A$ is strongly certain if and only if for any $x \in \Lt_r$
	either $x[A] \geq_A \Jn \coPm(\Lt_A)$ or there exists $B \in X$ such that 
	$x[B] 
	\ngeq_B c$ for any $c \in \coPm(\Lt_B)$.
\end{enumerate}
\end{lemma*}

\begin{proof}
We prove the items in order.
\paragraph{Item 1} We begin with the if part.
There are two cases, either $X = \emptyset$, or $X \neq \emptyset$.
In the first case, our assumption implies that $x[A] = 1_A$ for any $x \in 
\Lt_r$.
Hence, by construction of a reality $g$, $A \in g(x)$ for any $x \in \Lt_r$, 
in 
particular the bottom element of $\Lt_r$, and $r \models_g X \imp A$ holds.
Assume $X \neq \emptyset$.
Let $x \in \Lt_r$, and $g$ be any reality.
If $x[A] = 1_A$, then $A \in g(x)$ since $h_A(1_A)$ must be $1$ by definition 
of 
an interpretation.
Hence if $X \subseteq g(x)$, we still have $A \in g(x)$ and $x$ satisfies $X 
\imp 
A$ 
under $g$.
Suppose now $x[A] \neq 1_A$.
By assumption, there exists $B \in X$ such that $x[B] = 0_B$. 
Hence $h_B(x[B]) = 0$, and $B \notin g(x)$ so that $X \nsubseteq g(x)$.
Therefore, in any case, $x$ satisfies $X \imp A$ through $g$ and $r 
\models_{g} X 
\imp A $ for any $g \in \cc{R}$.

We prove the only if part using contrapositive.
Assume there exists $x \in \Lt_r$ such that for all $B \in X$, $x[B] \neq 
0_B$ 
and 
$x[A] \neq 1_A$. 
We construct a reality $g$ such that $r \not\models_{g} X \imp A$:
\begin{itemize}
	\item for $A$ put $h_A(y_A) = 1$ if $y_A \geq z_A$ in $\Lt_A$ with $z_A$ 
	some 
	cover of $x[A]$.
	Note that such a cover must exist since $x[A] \neq 1_A$.
	Put $h_A(y_A) = 0$ otherwise, so in particular $h_A(x[A]) = 0$.
	\item For any $B$ in $X$, let $h_B(y_B) = 1$ if $y_B \geq x[B]$ in 
	$\Lt_B$ 
	and 
	$h_B(y_B) = 0$ otherwise.
	Since $x[B] \neq 0_B$, $h_B(0_B) = 0$ is satisfied.
	\item For any other attribute $C \in \U$, let $h_C(y_C) = 1$ if $y_C = 
	1_C$ 
	and $0$ 
	otherwise.
\end{itemize}
It is clear that any $h$ thus defined is a $\mt$-homomorphism.
Using Proposition \ref{prop:meet-sublattice} and Definition \ref{def:realities}, the 
mapping $g$ obtained by combining all attributes interpretations is a reality.
If $X = \emptyset$, by construction of $g$ and $x$, $A \notin g(x)$ and hence 
$r \not\models_g X \imp A$.
Assume $X \neq \emptyset$.
For any $B \in X$, one has $B \in g(x)$ as $g_{\mid B}(x) = h_B(x[B]) = 1$ by 
definition 
of $h_B$.
We also have $A \notin g(x)$ since $g_{\mid A}(x) = h_A(x[A]) = 0$ by 
construction of 
$h_A$.
Therefore $r \not\models_{g} X \imp A$, and $X \imp A$ is not certain, 
concluding 
the 
proof of this item. 

\paragraph{Item 2} We begin with the if part.
Let $g$ be a strong reality and let $x \in \Lt_r$.
If $X \nsubseteq g(x)$, then $g(x)$ vacuously satisfies $X \imp A$.
Let us assume that $X \subseteq g(x)$ so that for any $B \in X$, there is $c 
\in 
\coPm(\Lt_B)$ such that $x[B] \geq c$.
As $g$ is a strong reality, there must be some $c' \in \coPm(\Lt_A)$ such 
that 
$g_{\mid A} = h_A^{c'}$ by Proposition \ref{prop:coprime-prime}.
By assumption, we have that $x[A] \geq c'$ as $x[A] \geq \Jn \coPm(\Lt_A)$.
Therefore, $A \in g(x)$ and we conclude that $r \models_g X \imp A$ for any 
reality $g 
\in \cc{R}_s$.
Note that in the case where $X = \emptyset$, $X \subseteq g(x)$ holds for any 
$x 
\in 
\Lt_r$ so that for all such $x$, we have $x[A] \geq \Jn \coPm(\Lt_A)$. 

We move to the only if part.
Let us assume that $X \imp A$ is strongly certain, and assume that there 
exists 
$x \in 
\Lt_r$ such that for any $B \in X$, $x[B] \geq c$ for some $c \in 
\coPm(\Lt_B)$.
If $X = \emptyset$, this is the case for any $x \in \Lt_r$.
We put $k = \card{\coPm(\Lt_A)}$.
We consider the sequence $g_1, \dots, g_k$ of realities as follows, for $1 
\leq i 
\leq 
k$:
\begin{itemize}
	\item $g_{i_{\mid B}} = h_B^c$ for some $c \in \coPm(\Lt_B)$ such that 
	$x[B] 
	\geq 
	c$, 
	$B \in X$,
	\item $g_{i_{\mid A}} = h_A^c$ for some $c \in \coPm(\Lt_A)$ such that 
	$g_{i_{\mid 
			A}}	\neq g_{j_{\mid A}}$ for any $1 \leq j < i$.
	\item $g_{i_{\mid C}} = h_C^c$ for some $c \in \coPm(\Lt_C)$, $C \in \U 
	\setminus X 
	\cup \{A\}$
\end{itemize}
For any $1 \leq i \leq k$, we have that $X \subseteq g_k(x)$.
As $X \imp A$ is strongly certain, we must also have $A \in g_k(x)$ by Definition
\ref{def:fd-glr}.
By construction of the sequence $g_1, \dots, g_k$, for any $c \in 
\coPm(\Lt_A)$ 
there 
is a (unique) $1 \leq i \leq k$ such that $g_{i_{\mid A}} = h_A^c$.
However, as $A \in g_i(x)$ for any $g_i$, we have that $x[A] \geq c$ for any 
$c 
\in 
\coPm(\Lt_A)$ and hence $x[A] \geq \Jn \coPm(\Lt_A)$.
Since this reasoning can apply to any $x \in \Lt_r$, the second item follows. 
\end{proof}

\begin{lemma*}[\ref{lem:possible-FD}]
Let $r$ be a relation over $\C_{\U}$, and $X \imp A$ a functional dependency.
Then:
\begin{enumerate}
	\item $X \imp A$ is not possible if and only if there exists $x \in \Lt_r$, 
	such that $x[A] = 0_A$ and $x[B] = 1_B$ for any $B \in X$.
	\item $X \imp A$ is strongly possible if and only if there exists $c_X \in 
	\Lt_{\U}$ such that $c_X[B] = c$, for some $c$ in $\coPm(\Lt_B)$, for every 
	$B \in X$, $c_X[B] = 0_B$ for every $B \notin X$, and 
	such that $\cl(c_X)[A] \geq c$ for some $c \in \coPm(\Lt_A)$.
\end{enumerate}
\end{lemma*}

\begin{proof}
We prove the statements in order. Let $r$ be a relation over $\C_{\U}$ and $X 
\imp A$ 
a functional dependency.
For the first item we need to introduce particular realities.
Recall that $\cc{A}(\Lt_A)$ is the set of atoms of $\Lt_A$.
Let $a \in \cc{A}(\Lt_A)$.
We define a reality $g^a$ as follows:
\[
g_{\mid B}^a(x) = \left\{
\begin{array}{l l}
	1 & \text{if } x[B] = 1_B \\
	0 & \text{otherwise}
\end{array}
\right.
\; \text{for any $B \neq A$ and} \;
g_{\mid A}^a(x) = \left\{
\begin{array}{l l}
	1 & \text{if } x[A] \geq a \\
	0 & \text{otherwise}
\end{array}
\right.
\]

\paragraph{Item 1} We begin with the if part.
Hence, assume there exists $x \in \Lt_r$ such that $x[A] = 0_A$ and for every 
$B 
\in X$, 
$x[B] = 1_B$.
In the case where $X = \emptyset$, then $x$ just satisfies $x[A] = 0_A$.
Let $g$ be a reality.
By definition of interpretations, we must have $A \notin g(x)$ and  $X 
\subseteq g(x)$.
Hence, $x$ does not satisfy $X \imp A$, so that $r \not\models_{g} X \imp A$.
Since this holds for any reality $g$, we conclude that $X \imp A$ is not 
possible.

We show the only if part.
Assume that $X \imp A$ is not possible in $\Lt_r$, that is for any reality 
$g$, 
$\Lt_r 
\not\models_g X \imp A$.
In particular, this is true for any reality $g^a$, $a \in \cc{A}(\Lt_A)$.
Then, for any $a \in \cc{A}(\Lt_A)$, there exists $x_a \in \Lt_r$ such that 
$x[B] 
= 1_B$ 
for any $B \in X$ and $x[A] \ngeq_A a$, by definition of $g^a$ and 
\ref{def:fd-glr}.
As $\Lt_r$ is a meet-sublattice of $\Lt_{\U}$, $x = \Mt_{a \in \cc{A}(\Lt_A)} 
x_a$ also 
belongs to $\Lt_r$.
However, it must be that $x[B] = 1_B$ for any $B \in X$.
Furthermore, we have that $x[A] = 0_A$ as $x_a \ngeq_A a$ for any $a \in 
\cc{A}(\Lt_A)$,
which concludes the proof of the item.

\paragraph{Item 2} We begin with the only if part.
Let $g$ be a strong reality such that $r \models_g X \imp A$ and consider the 
following 
$c_X$:
\begin{itemize}
	\item $c_X[B] = 0_B$ for $B \notin X$,
	\item $c_X[B] = c$ for $c \in \coPm(\Lt_B)$ such that $g_{\mid B} = 
	h_B^c$ 
	for any $B \in X$.
	Such a $c$ exists by Proposition \ref{prop:coprime-prime} as $g$ is a strong 
	reality.
\end{itemize}
Again, if $X = \emptyset$, then $c_X[B] = 0_B$ for every $B \in \U$.
We consider $\cl(c_X) \in \Lt_r$.
As $c_X \leq \cl(c_X)$, and by construction of $c_X$, we must have $X \subseteq 
g(\cl(c_X))$.
However, $r \models_g X \imp A$ which implies that $A \in g(\cl(c_X))$ by Definition
\ref{def:fd-glr}.
Hence, since $g$ is a strong reality, there must be some $c \in \coPm(\Lt_A)$ 
such that 
$g_{\mid A} = h_A^c$ and $\cl(c_X)[A] \geq c$.

We move to the if part.
Assume such $c_X$ exists.
We construct a strong reality $g$ such that $r \models_g X \imp A$ as follows:
\begin{itemize}
	\item $g_{\mid B} = h_B^c$ for some $c \in \coPm(\Lt_B)$ if $B \notin X 
	\cup 
	\{A\}$,
	\item $g_{\mid B} = h_B^{c_X[B]}$ if $B \in X$,
	\item $g_{\mid A} = h_A^c$ for some $c \in \coPm(\Lt_A)$ such that $c 
	\leq 
	\cl(c_X)[A]$.
	Note that such $c$ must exist by assumption.
\end{itemize}
Thus defined, $g$ is a strong reality as $c_X[B] \in \coPm(\Lt_B)$ for any $B 
\in X$.
Now we show that $r \models_g X \imp A$.
Let $x \in \Lt_r$.
Either $x \geq \cl(c_X)$ or $x \ngeq \cl(c_X)$.
In the first case, by construction of $g$ we have $X \subseteq g(x)$, but we 
also 
have 
$A 
\in g(x)$ as $g_{\mid A} = h_A^c$ with $c \leq \cl(c_X)[A]$, so that $X \imp A$ 
is satisfied for $g(x)$.
Now suppose $x \ngeq \cl(c_X)$.
We have that $x \ngeq c_X$.
However for any $B \notin X$, $c_X[B] = 0_B$ so that necessarily $c_X[B] \leq 
x[B]$.
Therefore, there must exist $B \in X$ such that $x[B] \ngeq c_X[B]$.
Hence, by construction of $g$ we will obtain $B \notin g(x)$ and consequently 
$X 
\nsubseteq g(x)$ so that $X \imp A$ is also satisfied by $g(x)$.
As for any $x \in \Lt_r$, $X \subseteq g(x)$ implies $A \in g(x)$, we 
conclude that $r \models_g X \imp A$ and $X \imp A$ is strongly possible.
\end{proof}


\section{Computational results on abstract FDs} 
\label{sec:AFDs}

In this section, we detail some computational characteristics of abstract FDs.
First, we recall some notations to ease the understanding.
Let $\C_{\U}$ be a scheme context with comparability function $f_{\U}$ and corresponding 
lattice $\Lt_{\U}$ of abstract tuples.
Let $\Lt \in \Sb(\Lt_{\U})$ with associated closure operator $\cl$ and let $x \imp y$ be 
an abstract FD.
We write $\Lt \models x \imp y$ if for every $z \in \Lt$, $x \leq z$ implies $y \leq z$, 
that is $z \models x \imp y$.
We have $\Lt \models x \imp y$ if and only if $y \leq \cl(x)$.
If $\IS$ is a set of abstract FDs, $\Lt \models \IS$ if $\Lt \models x \imp y$ for every 
$x \imp y$ in $\IS$.

Let $r$ a relation over $\C_{\U}$.
The relation $r$ is associated to a lattice $\Lt_r \in \Sb(\Lt_{\U})$ given by 
\[ 
\Lt_r = \left(\left\{ \bigwedge T \mid T \subseteq f_{\U}(r)\right\}, \leq_{\U}\right)
\]
The closure operator associated to $\Lt_r$ is $\cl_r$.
We write $r \models x \imp y$ if for every $z \in f_{\U}(r)$, $z \models x \imp y$.
The notation $r \models \IS$ follows.
Recall from Proposition 3 in the manuscript that $r \models x \imp y$ if and only if $\Lt_r 
\models x \imp y$.

Similarly to $r$ and $\Lt_r$, a set $\IS$ of abstract FDs is associated to a lattice 
$\Lt_{\IS}$ defined by $\Lt_{\IS} = (\{x \in \Lt_{\U} \mid x \models \IS\}, \leq_{\U})$.
We call the corresponding closure operator $\cl_{\IS}$.

We study the implication problem for abstract functional dependencies: \textit{``given a 
$\IS$, does the abstract FD $x \imp y$ follow from $\IS$?''}
On this purpose, we use the extended Armstrong axioms for lattice implications 
\cite{day1992lattice}.
They are straightforward extensions of Armstrong axioms for functional dependencies.
They read as follows, for every $x, y, z \in \Lt_{\U}$:
\begin{enumerate}
\item if $x \leq y$, then $\IS \vdash_{\C_{\U}} x \rightarrow y$ (reflexivity);
\item if $\IS \vdash_{\C_{\U}} x \rightarrow y$ then $\IS \vdash_{\C_{\U}} x \jn z
\rightarrow y \jn z$ (augmentation);
\item if $\IS \vdash_{\C_{\U}} x \rightarrow y$ and $\Sigma \vdash_{\C_{\U}} y 
\rightarrow z$ then $\Sigma \vdash_{\C_{\U}} x \rightarrow z$ (transitivity).
\end{enumerate}

\begin{definition}
$\IS \vdash_{\C_{\U}} x \rightarrow y$ if there exists a derivation (or a proof) of  
$x \rightarrow y$  by using the extended Armstrong axioms.
\end{definition}

\begin{definition}
$\IS \models_{\C_{\U}} x \rightarrow y$ if for every relation $r$, $r 
\models_{\C_{\U}} \IS$ entails $r \models_{\C_{\U}} x \rightarrow y$.  
\end{definition}

If no confusion arises, we drop the subscript $\C_{\U}$.
The next propositions establish useful properties for subsequent discussions.

\begin{proposition} \label{prop:rep-rmodels}
We have $r \models \IS$ if and only if $\Lt_r \subseteq \Lt_{\IS}$.
\end{proposition}
\begin{proof}
We begin with the only if part.
Assume that $r \models \IS$. 
Every abstract tuple $x \in f_{\U}(r)$ satisfies $\IS$, so that $x \in \Lt_{\IS}$ holds 
by definition of $\Lt_{\IS}$.
Since $\Lt_r = (\{ \bigwedge T \mid T \subseteq f_{\U}(r)\}, \leq_{\U})$, $\Lt_{\IS} \in 
\Sb(\Lt_{\U})$ and $f_{\U}(r) \subseteq \Lt_{\IS}$, $\Lt_r \subseteq 
\Lt_{\IS}$ 
follows.

We move to the if part.
We use contrapositive.
Assume that $r \not\models \IS$.
There exists an abstract tuple $x \in f_{\U}(r)$ such that $x \not\models \IS$.
Thus, $x \notin \Lt_{\IS}$ and $\Lt_r \nsubseteq \Lt_{\IS}$ follows, which concludes the 
proof.
\end{proof}

The following proposition is a direct consequence of Theorem 3.1 and Theorem 3.2 in 
\cite{day1992lattice}.

\begin{proposition} \label{prop:rep-sigmaproof}
We have $\IS \vdash x \imp y$ if and only if $\Lt_{\IS} \models x \imp y$.
\end{proposition}

Thus, there is a strong relationship between a set of abstract FDs and its 
models in the lattice $\Lt_{\U}$.
We show that deciding whether $\IS \vdash x \imp y$ can be decided in polynomial time, 
using Proposition \ref{prop:rep-sigmaproof} and Algorithm \ref{alg:rep-closure}.
Recall that the scheme context $\C_{\U}$, including each abstract lattice, is part of our 
input. 
First, we prove that Algorithm \ref{alg:rep-closure} correctly computes $\phi_{\IS}(x)$ 
for every $x \in \Lt_{\U}$ in polynomial time.

\begin{theorem} \label{thm:rep-algo-closure}
Let $\C_{\U}$ be a scheme context, $\IS$ a set of abstract FDs and $x$ an abstract tuple.
Then, Algorithm \ref{alg:rep-closure} computes $\phi_{\IS}(x)$ in polynomial time in the 
size of $\IS$ and $\C_{\U}$.
\end{theorem}

\begin{proof}
First we show the correctness of the algorithm.
Observe that Algorithm \ref{alg:rep-closure} always terminates.
Let $x \in \Lt_{\U}$ and consider the output $x^{\IS}$ of the algorithm.
We show that $x^{\IS} = \phi_{\IS}(x)$.
We first prove that $x^{\IS}$ is a model of $\IS$ and that $\phi_{\IS}(x) \leq x^{\IS}$.
Suppose for contradiction that $x^{\IS} \not\models \IS$.
Then, there exists an abstract FD $y \imp z$ in $\IS$ such that $y \leq x^{\IS}$ but $z 
\nleq x^{\IS}$.
However, this contradicts the fact that Algorithm \ref{alg:rep-closure} ends on $x^{\IS}$.
Thus, $x^{\IS} \models \IS$ and since $x \leq x^{\IS}$, we deduce that $\phi_{\IS}(x) 
\leq \phi_{\IS}(x^{\IS}) = x^{\IS}$.

Now we show that $x^{\IS} \leq \phi_{\IS}(x)$.
We proceed by induction on the while condition of the algorithm.
At step $0$, $x^{\IS} = x$ and $x^{\IS} \leq \phi_{\IS}(x)$ follows from the extensivity 
of $\phi_{\IS}$.
let us assume that up to a given step $i$, $x^{\IS} \leq \phi_{\IS}(x)$ holds.
If there is no abstract FD $y \imp z$ such that $y \leq x^{\IS}$ then the result follows.
Now, suppose that there is an abstract FD $y \imp z$ in $\IS$ such that $y \leq x^{\IS}$.
Since $y \imp z$ holds in $\IS$, we derive that $z \leq \phi_{\IS}(y)$.
Thus, we have $z \leq \phi_{\IS}(y) \leq \phi_{\IS}(x^{\IS}) \leq \phi_{\IS}(x)$ by 
inductive hypothesis, and using the properties of $\phi_{\IS}$.
Consequently, $z \jn x^{\IS} \leq \phi_{\IS}(x)$ holds.
Hence, we derive by induction that the output $x^{\IS}$ of Algorithm 
\ref{alg:rep-closure} satisfies $x^{\IS} \leq \phi_{\IS}(x)$.
Consequently, $x^{\IS} = \phi_{\IS}(x)$ indeed holds.

The total complexity of Algorithm \ref{alg:rep-closure}, depends on the cost of checking 
conditions of the while loop and the cost of the join operation.
Since the abstract lattices of each attribute context are part of the input, this 
operation can be done in polynomial time in the size of $\IS$ and $\C_{\U}$, concluding 
the proof.
\end{proof}

\begin{algorithm2e}[h!]
\caption{Closure algorithm}

\KwIn{A set $\IS$ of abstract FDs over $\C_{\U}$, $u \in \Lt_{\U}$}
\KwOut{The closure  $x^\IS$ of $x$.}

\Begin{
$x^\Sigma = x$\;		
\While{there exists  $y \imp z \in \IS$ such that $z \leq  x^\IS$}{
		$x^\IS = x^{\IS} \jn z$\;
		$\IS = \IS \setminus \{y \imp z\}$\;}		
}	
return $x^\IS$ \;
\label{alg:rep-closure}	
	
\end{algorithm2e}

Remark that Algorithm \ref{alg:rep-closure} is a straightforward extension of classical 
closure algorithms \cite{beeri1979computational}.
According to Proposition \ref{prop:rep-sigmaproof}, $\IS \vdash x \imp y$ can be decided 
by computing $\phi_{\IS}(x)$ with Algorithm \ref{alg:rep-closure}, and then testing that 
$y \leq \phi_{\IS}(x)$.
We deduce the following theorem.

\begin{theorem}
Let $\C_{\U}$ be a scheme context, $\IS$ a set of abstract FDs and $x \imp y$ an abstract 
FD.
Testing that $\IS \vdash x \imp y$ can be done in polynomial time in the size of $\IS$ 
and $\C_{\U}$.
\end{theorem}

Then, we demonstrate that $\IS \vdash x \imp y$ 
(equivalently that $y \leq \phi_{\IS}(x)$), is not equivalent to $\IS \models x \imp y$.
More precisely, we show that extended Armstrong axioms are sound but incomplete.
This is because comparability functions are not required to be reflexive.

\begin{lemma} \label{lem:rep-sound}
Let $\C_{\U}$ be a scheme context, $\IS$ a set of abstract FDs over $\C_{\U}$ and $x \imp 
y$ another abstract FD.
Then, $\IS \vdash x \imp y$ implies $\IS \models x \imp y$.
\end{lemma}

\begin{proof}
Let $\C_{\U}$ be a scheme context, $\IS$ a set of abstract FDs and $x \imp y$ an abstract 
FD such that $\IS \vdash x \imp y$.
From Proposition \ref{prop:rep-sigmaproof}, we have $\Lt_{\IS} \models x \imp y$.
Let $r$ be a relation over $\C_{\U}$ such that $r \models \IS$.
By Proposition \ref{prop:rep-rmodels}, $\Lt_r \subseteq \Lt_{\IS}$ holds.
As $\Lt_r \subseteq \Lt_{\IS}$, $\Lt_r \models x \imp y$ holds and so does $r \models x 
\imp y$, which concludes the proof.
\end{proof}

\begin{lemma} \label{lem:rep-complete}
There exists a scheme context $\C_{\U}$, an abstract FD $x \imp y$ and a set $\IS$ of 
abstract FDs such that $\IS \models x \imp y$ but $\IS \not\vdash x \imp y$.
\end{lemma}

\begin{proof}
Let $\U = \{A, B\}$ be a relation scheme where $\dom(A) = \dom(B) = \cb{N} \cup 
\{\ctt{null}\}$.
We associate the abstract lattice $\Lt_A$ and $\Lt_B$ presented in Figure 
\ref{fig:completeness-1} to $A$ and $B$ (resp.).
The lattice $\Lt_R = \Lt_A \times \Lt_B$ is the lattice of all possible abstract tuples.
We also define two comparability functions $f_A, f_B$ for $A, B$ (resp.) as follows:
\[ 
f_A(u, v) = f_B(u, v) = \begin{cases}
1 & \text{if } u = v \text{ with } u \neq \ctt{null} \text{ or } v \neq \ctt{null} \\
u & \text{if } u = v = \ctt{null} \\
0 & \text{otherwise.} 
\end{cases}
\]
We consider the scheme context $\cc{C}_{\U} = \{(A, f_A, \Lt_A), (B, f_B, \Lt_B)\}$.
\begin{figure}[ht!]
\centering 
\includegraphics[scale=1.0, page=1]{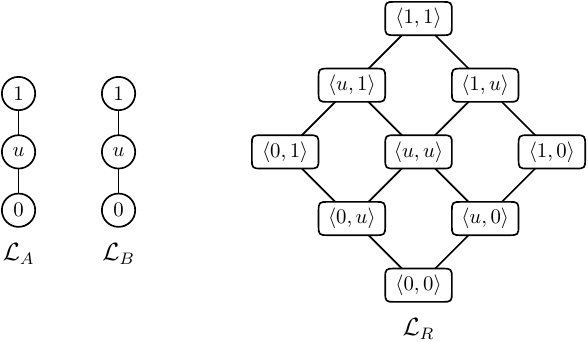}%
\caption{Abstract lattices}
\label{fig:completeness-1}
\end{figure}
Let $\IS = \{ \langle 0, 1 \rangle \imp \langle u, 1 \rangle, \allowbreak \langle 1, 0 \rangle 
\allowbreak \imp \langle 1, 1 \rangle\}$ be a set of abstract FDs.
The abstract lattice $\Lt_{\IS}$ associated to $\IS$ is given in Figure 
\ref{fig:completeness-2}.
\begin{figure}[ht!]
\centering 
\includegraphics[scale=1.0, page=2]{Figures/reflexivity-armstrong.pdf}%
\caption{Abstract lattice $\Lt_{\IS}$}
\label{fig:completeness-2}
\end{figure}
Using Proposition \ref{prop:rep-sigmaproof} and the fact that $\Lt_{\IS} \models x \imp 
y$ if and only if $y \leq \phi_{\IS}$, we derive that $\IS \not\vdash \langle 0, u 
\rangle \imp \langle u, 0 \rangle$ as $\langle 0, u \rangle \in \Lt_{\IS}$.

We show that there is no relation $r$ over $\C_{\U}$ satisfying $r \models 
\IS$ and $r \not\models \langle 0, u \rangle \imp \langle u, 0 \rangle$.
It follows that $\IS \models \langle 0, u \rangle \imp \langle u, 0 \rangle$.
Assume for contradiction such a $r$ exists.
By Proposition \ref{prop:rep-rmodels}, $r \models \IS$ is equivalent to $\Lt_r \subseteq 
\Lt_{\IS}$.
Thus, $r \not\models \langle 0, u \rangle \imp \langle u, 0 \rangle$ implies that 
$\langle 0, u \rangle \in \Lt_r$ since it is the unique counter-example to $\langle 0, u 
\rangle \imp \langle u, 0 \rangle$ in $\Lt_{\IS}$.
Moreover, $\langle u, 0 \rangle$ is a meet-irreducible element of $\Lt_{\IS}$. 
It follows that $\langle u, 0 \rangle \in f_{\U}(r)$ must hold.
By definition of $f_A, f_B$, we deduce that $r$ contains at least two tuples $t_1, t_2$ 
such that $t_1(A) = i$, $t_2(A) = j$ and $t_1(B) = t_2(B) = \ctt{null}$ where $i, j$ are 
distinct integers.
We have $f_{\U}(t_1, t_2) = \langle 0, u \rangle$ as required.
However, we also have $f_{\U}(t_1, t_1) = \langle 1, u \rangle$ with $\langle 1, u 
\rangle \notin \Lt_{\IS}$.
As a consequence, $\Lt_r \nsubseteq \Lt_{\IS}$ from which $r \not\models \IS$ follows by 
Proposition \ref{prop:rep-rmodels}, a contradiction.
\end{proof}

Using Lemma \ref{lem:rep-sound} and Lemma \ref{lem:rep-complete}, we obtain the following 
theorem.

\begin{theorem} \label{thm:rep-axioms}
The extended Armstrong axioms are sound but not complete.
\end{theorem}

As a consequence, we have yet no guarantees on the complexity of deciding $\IS \models x 
\imp y$.
However, enforcing reflexivity for each comparability functions solves the problem.

\begin{lemma}\label{lem:rep-ref-complete}
Let $\C_{\U}$ be a scheme context where for each attribute $A$ in $\U$, $f_A(u, u) = 1_A$ 
for every $u \in \dom(A)$.
Let $\IS$ a set of abstract FDs over $\C_{\U}$ and $x \imp y$ another abstract FD.
Then, $\IS \models x \imp y$ implies $\IS \vdash x \imp y$.
\end{lemma}

\begin{proof}
We show the contrapositive. 
Assume that $\IS \not\vdash x \imp y$. 
We construct a relation $r$ such that $r \models \IS$ but $r \not\models x \imp y$.
Using Proposition \ref{prop:rep-sigmaproof}, $\IS \not\vdash x \imp y$ is equivalent to 
$y \nleq \phi_{\IS}(x)$.
Thus, let $r = \{t_1, t_2\}$ be a relation over $\C_{\U}$ such that $f_{\U}(t_1, t_2) = 
\phi_{\IS}(x)$.
Such a relation must exist as comparability functions are surjective by definition.
Now, as each comparability function is assumed to be reflexive, we have $f_{\U}(t_1, 
t_1) = f_{\U}(t_2, t_2) = 1_{\U}$ where $1_{\U}$ is the top element of $\Lt_{\U}$.
Because $\Lt_{\IS} \in \Sb(\Lt_{\U})$ and $\phi_{\IS}(x) \in \Lt_{\IS}$, we deduce that 
$\Lt_r \subseteq \Lt_{\IS}$.
By Proposition \ref{prop:rep-rmodels}, this is equivalent to $r \models \IS$.
Now since $y \nleq \phi_{\IS}(x)$ with $\phi_{\IS}(x) \in \Lt_r$, we deduce that $r 
\not\models x \imp y$, which concludes the proof.
\end{proof}

\begin{theorem} \label{thm:rep-ref-axioms}
The extended Armstrong axioms are sound and complete provided comparability functions are 
reflexive. 
\end{theorem}

In spite of Theorem \ref{thm:rep-axioms}, the work of Day \cite{day1992lattice} on 
lattice implications still allows to reason on abstract FDs independently of any 
relations on $\C_{\U}$.
Let $\IS, \IS'$ be two sets of abstract FDs.
We say that $\IS, \IS'$ are \emph{equivalent} if $\Lt_{\IS} = \Lt_{\IS'}$.
Putting $\IS \vdash \IS'$ if $\IS \vdash x \imp y$ for every $x \imp y \in \IS'$, 
we immediately derive:

\begin{proposition}
The sets of abstract FDs $\IS_1$ and $\IS_2$ are equivalent if and only if $\IS_1 \vdash 
\IS_2$ and $\IS_2 \vdash \IS_1$.
Moreover, $\IS_1 \vdash \IS_2$ can be tested in polynomial time in the size of $\IS_1, 
\IS_2$ and $\C_{\U}$. 
\end{proposition}

A set $\IS$ of abstract FDs is \emph{redundant} if there exists an abstract FD $x \imp y$ 
in $\IS$ such that $\IS \setminus \{x \imp y\} \vdash x \imp y$. 
It is \emph{irredundant} otherwise. 
An abstract FD $x \imp y$ is \emph{right-closed} in $\IS$ if $y = \phi_{\IS}(x)$.
The \emph{left-saturation} of an abstract FD $x \imp y$ in $\IS$ is the abstract FD 
$\phi_{\IS'}(x) \imp y$ where $\IS' = \IS \setminus \{x \imp y\}$.
Eventually, $\IS$ is \emph{minimal} if for every equivalent set of abstract FDs $\IS'$, 
$\card{\IS} \leq \card{\IS'}$, where $\card{\IS}$ is the number of abstract FDs in $\IS$.
The Theorem 6.1 of Day \cite{day1992lattice} can be restated and adapted to our terms as 
follows:

\begin{theorem}[\cite{day1992lattice}] \label{thm:rep-minimal}
Let $\C_{\U}$ be a scheme context and $\IS$ be a set of abstract FD.
Consider the following three-step algorithm applied to $\IS$ and let $\IS_m$ be the 
resulting set of abstract FDs:
\begin{itemize}
\item right-closure
\item left-saturation
\item remove redundancy
\end{itemize}
Then, $\IS_m$ is minimal.
\end{theorem}

Observe that Theorem \ref{thm:rep-minimal} is the lattice counterpart of theorems on 
minimal covers for functional dependencies \cite{maier1983theory, wild1995computations}.
With the help of Proposition \ref{prop:rep-sigmaproof}, Algorithm \ref{alg:rep-closure}, 
and Theorem \ref{thm:rep-minimal}, we obtain

\begin{theorem}
Let $\C_{\U}$ be a scheme context, $\IS$ a set of abstract FDs over $\C_{\U}$.
The following tasks can be computed in poylnomial time in the size of $\C_{\U}$ and $\IS$:
\begin{itemize}
\item computing an equivalent irredundant set $\IS'$ of abstract FDs,
\item computing an equivalent minimal set $\IS$ of abstract FDs.
\end{itemize}
\end{theorem}

We conclude this section by giving hardness results regarding abstract FDs.
On this purpose, we show how abstract FDs generalize classical FDs.
Beforehand, we define some more notions.
A set of abstract FDs $\IS$ is an \emph{abstract cover} for a relation $r$ if for every 
abstract FD $x \imp y$, $r \models x \imp y$ if and only if 
$\IS\models x \imp y$.
An abstract tuple $x$ is an \emph{abstract key} of $r$ if $\phi_r(x) = 1_{\U}$, $\phi_r$ 
being the closure operator associated to $r$.
The abstract key $x$ is minimal if for every $y < x$ in $\Lt_r$, $y$ is not an abstract 
key of $r$.
The size of an abstract key is its number of non-bottom elements.

The subsequent discussion is a more formal version of the end of the paragraph 
\textit{Relational model (without nulls)} of Subsection 4.2. in the revised manuscript.
Let $\U$ be a relation scheme such that $\card{\U} = n$, and $\ctt{null} \notin \dom(A)$ 
for each $A \in \U$.
Let $\mathbf{F}$ be a set of functional dependencies over $\U$.
For each $A \in \U$, let $\Lt_A$ be the lattice $0 \leq 1$ and let $f_A$ be the 
comparability function given by $f(u, v) = 1$ if and only if $u = v$.
Let $\C_A = (A, f_A, \Lt_A)$ be the resulting attribute context and let $\C_{\U} = \{\C_A 
\mid A \in \U\}$ be the associated scheme context.
An abstract tuple $x$ is then a binary word over $\{0, 1\}^n$, which can be interpreted 
as the characteristic vector of the corresponding subset $X$ of $\U$.
The context $\C_{\U}$ admits a unique (strong) reality defined by $x_g = 1_{\U}$, where 
$1_{\U}$ is the top element of $\Lt_{\U}$.
The corresponding reality $g$ precisely coincides with the classical equality.
On the other hand, since the \ctt{null} value is not considered, the comparability 
functions are reflexive.
Thus, Theorem \ref{thm:rep-axioms} applies and $\IS \models x \imp y$ can be checked in 
polynomial time for every set of abstract FDs $\IS$ and every abstract FD $x \imp y$.
This lays the ground for the study of classical problems regarding abstract keys and 
abstract covers of a relation. 

The following proposition settles the relationship between abstract FDs and FDs in this 
context.

\begin{proposition}
We have $r \models X \imp Y$ if and only if $r \models x \imp y$.
As a consequence, we have:
\begin{itemize}
\item an abstract tuple $x$ is an abstract minimal key of $r$ if and only if the 
corresponding set $X$ is a minimal key of $r$,
\item a set of abstract FDs $\IS$ is an abstract cover of $r$ if and only if the 
corresponding set of FDs $\mathbf{F}$ is a cover of $r$.
\end{itemize}
\end{proposition}

\begin{proof}
This is a consequence of Theorem \ref{thm:projection-g} in the paper and the fact that 
the unique reality $g$ coincide with strict equality. 
\end{proof}

We derive the following more general theorem, which applies to any scheme context.

\begin{theorem}
Let $\C_{\U}$ be any scheme context.
Then:
\begin{itemize}
\item the problem of computing the size of a minimal abstract key of a relation is 
\NP-complete,
\item the problem of listing the abstract minimal keys of a relation is harder than the 
problem of listing the minimal keys of a relation.
\item the problem of computing an abstract minimal cover of a relation is harder than the 
problem of computing a minimal cover of a relation.
\end{itemize}
\end{theorem}

\end{document}